\newcommand{\abs}[1]{|#1|}
\newcommand{\y}{\mathbf{y}}
\newcommand{\n}{\mathbf{n}}
\newcommand{\h}{\mathbf{H}}
\newcommand{\sbf}{\mathbf{s}}
\newcommand{\Lbf}{\mathbf{L}}
\newcommand{\ubf}{\mathbf{u}}
\newcommand{\zbf}{\mathbf{z}}
\newcommand{\ybf}{\mathbf{y}}
\newcommand{\aalpha}{\boldsymbol{\alpha}}
\newcommand{\Lop}{{\rm L}}
\def\half{{\textstyle\frac{1}{2}}}
\def\dint{\;\mathrm{d}}
\newcommand{\inR}{\in \mathbb{R}}
\newcommand{\R}{\mathbb{R}}
\newcommand{\C}{\mathbb{C}}
\newcommand{\N}{\mathbb{N}}
\newcommand{\Z}{\mathbb{Z}}
\newcommand{\Dop}{{\rm D}}
\newcommand{\bx}{{\boldsymbol x}}
\newcommand{\bk}{{\boldsymbol k}}
\def\CH{{\widehat{\mathscr{P}}}}
\newtheorem{theo}{Theorem}
\newtheorem{defn}{Definition}
\begin{document}


\title{Sparse Stochastic Processes and Discretization of Linear Inverse Problems}
\author{Emrah~Bostan$^*$,~\IEEEmembership{Student~Member,~IEEE,}
        Ulugbek~S.~Kamilov,~\IEEEmembership{Student~Member,~IEEE,}
        	Masih~Nilchian,~\IEEEmembership{Student~Member,~IEEE,}
        and~Michael~Unser,~\IEEEmembership{Fellow,~IEEE}
\thanks{This work was partially supported by the Center for Biomedical Imaging of the Geneva-Lausanne Universities and EPFL, as well as by the foundations Leenaards and Louis-Jeannet and by the European Commission under Grant ERC-2010-AdG 267439-FUN-SP.}
\thanks{The authors are with the Biomedical Imaging Group, \'Ecole polytechnique f\'ed\'erale de Lausanne (EPFL), Station 17, CH--1015 Lausanne VD, Switzerland}}

\markboth{Sparse Stochastic Processes and Discretization of Linear Inverse Problems}
{Bostan, Kamilov, Nilchian, and Unser}

\maketitle

\ifCLASSOPTIONpeerreview
\begin{center}
 \vspace{-0.6cm}
 Authors' contact information:\\
 \vspace{0.2cm}
 \'{E}cole polytechnique f\'ed\'erale de Lausanne\\
 Biomedical Imaging Group\\
 CH-1015 Lausanne VD, Switzerland.\\
 Tel: +41(0)216935136, Fax: +41(0)216933701.\\
 Email: \url{{emrah.bostan, ulugbek.kamilov, masih.nilchian, michael.unser}@epfl.ch}\\
 Web: \url{http://bigwww.epfl.ch/}\\
 \vspace{0.2cm}
\end{center}
\fi

\begin{abstract}
\boldmath 
We present a novel statistically-based discretization paradigm and derive a class
of maximum a posteriori (MAP) estimators for solving ill-conditioned linear inverse
problems. We are guided by the theory of sparse stochastic processes, which specifies continuous-domain signals as
solutions of linear stochastic differential equations. Accordingly, we show that the
class of admissible priors for the discretized version of the signal is
confined to the family of infinitely divisible distributions. 
Our estimators not only cover the well-studied methods of Tikhonov and
$\ell_1$-type regularizations as particular cases, but also open the
door to a broader class of sparsity-promoting regularization schemes that are typically nonconvex. We provide an algorithm that  handles the corresponding nonconvex problems and illustrate the use of our formalism by applying it to deconvolution, MRI, and X-ray tomographic reconstruction problems. Finally, we compare the performance of estimators associated with models of increasing sparsity.
\end{abstract}

\ifCLASSOPTIONjournal
\begin{IEEEkeywords}
Innovation models, MAP estimation, non-Gaussian statistics, 
sparse stochastic processes, sparsity-promoting regularization, nonconvex optimization.  
\end{IEEEkeywords}
\fi

\ifCLASSOPTIONpeerreview
 \begin{center} \bfseries EDICS Category: SAS-STAT,  IMD-ANAL\end{center}
\fi

\IEEEpeerreviewmaketitle

\section{Introduction}\label{sec:Introduction}
\IEEEPARstart{W}{e} consider linear inverse problems that occur in a variety of biomedical imaging applications~\cite{Vonesch.etal2006, Bertero.Boccacci1998,Ribes.Schmitt2008}. In this class of problems, the measurements $\y$ are obtained through the forward model
\begin{equation}\label{eq:discreteLinearModel}
\y=\h\sbf+\n\text{,}
\end{equation}
where $\sbf$ represents the true signal/image. The linear operator $\h$ models the physical response of the acquisition/imaging device and $\n$ is some additive noise. A conventional approach for reconstructing $\sbf$ is to formulate the reconstructed signal $\sbf^{\star}$ as the solution of the optimization problem
\begin{equation}\label{eq:generalReconstructionFormula}
\sbf^{\star} = \arg \underset{\sbf}{\min}~\mathcal{D}(\sbf;\y) + \lambda \mathcal{R}(\sbf)\text{,}
\end{equation}
where $\mathcal{D}(\sbf;\y)$ quantifies the distance separating the reconstruction from the observed measurements, $\mathcal{R}(\sbf)$ measures the regularity of the reconstruction, and  $\lambda > 0$ is the regularization parameter.

In the classical quadratic (Tikhonov-type) reconstruction schemes, one utilizes $\ell_2$-norms for measuring  both the data consistency and the reconstruction regularity~\cite{Ribes.Schmitt2008}. In a general setting, this leads to a smooth optimization problem of the form
\begin{equation}\label{eq:tikhonovRegularization}
\sbf^{\star} = \arg \underset{\sbf}{\min} \| \y - \h\sbf \|_2^2 + \lambda \|\mathbf{R}\sbf\|_2^2\text{,}
\end{equation}
where $\mathbf{R}$ is a linear operator and the formal solution is given by
\begin{equation}\label{eq:solutionTikhonov}
\sbf^{\star} = \left( \h^{\mathrm{T}}\h + \lambda \mathbf{R}^\mathrm{T}\mathbf{R} \right)^{-1}\h^\mathrm{T}\y\text{.}
\end{equation} 
The linear reconstruction framework expressed in~\eqref{eq:tikhonovRegularization}-\eqref{eq:solutionTikhonov} can also be derived from a statistical perspective. Under the hypothesis that $\sbf$ follows a multivariate zero-mean Gaussian distribution with covariance matrix $\mathbf{C}_{\sbf\sbf}=\mathbb{E}\lbrace\sbf\sbf^\mathrm{T}\rbrace$, the operator $\mathbf{C}_{\sbf\sbf}^{-1/2}$ \textit{whitens} $\sbf$ (i.e., renders its components independent). Moreover, if $\n$ is additive white Gaussian noise (AWGN) of variance $\sigma^2$,  the maximum a posteriori (MAP) formulation of the reconstruction problem yields
\begin{equation}\label{eq:gaussianMAP}
\sbf_{\rm MAP} = (\h^{\mathrm{T}}\h+\sigma^2\mathbf{C}_{\sbf\sbf}^{-1})^{-1}\h^\mathrm{T}\y\text{,} 
\end{equation}
which is equal to~\eqref{eq:solutionTikhonov} when $\mathbf{C}_{\sbf\sbf}^{-1/2}=\mathbf{R}$ and $\sigma^2=\lambda$. In the Gaussian scenario, the MAP estimator is known to yield the minimum mean square error (MMSE) solution. The equivalent Wiener solution~\eqref{eq:gaussianMAP} is also applicable for non-Gaussian models with known covariance $\mathbf{C}_{\sbf\sbf}$ and is commonly referred to as the linear minimum mean square error (LMMSE)~\cite{Kay.1993}.

In recent years, the paradigm in variational formulations for signal reconstruction has shifted from the classical linear schemes to the \textit{sparsity-promoting} methods motivated by the observation that many signals that occur naturally have sparse or nearly-sparse representations in some transform domain~\cite{Mallat.2008}. The promotion of sparsity is achieved by specifying well-chosen non-quadratic regularization functionals and results in nonlinear reconstruction. One common choice for the regularization functional is $\mathcal{R}(\mathbf{v})=\|\mathbf{v}\|_1$, where $\mathbf{v}=\mathbf{W}^{-1}\sbf$ represents the coefficients of a wavelet (or a wavelet-like multiscale) transform~\cite{Lustig.etal2007}. An alternative choice is $\mathcal{R}(\sbf)=\|\Lbf\sbf\|_1$, where $\Lbf$ is the discrete version of the gradient or Laplacian operator, with the gradient one being known as total-variation (TV) regularization~\cite{Rudin.etal1992}. Although using the $\ell_1$ norm as regularization functional has been around for some time (for instance, see~\cite{Claerbout.Muir1973,Taylor.etal1979}), it is currently at the heart of sparse signal reconstruction problems. Consequently, a significant amount of research is dedicated to the design of efficient algorithms for nonlinear reconstruction methods~\cite{Zibulevsky.Elad2010}.

The current formulations of sparsity-promoting regularization are based on solid variational principles and are predominantly deterministic. They can also be interpreted in statistical terms as MAP estimators by considering generalized Gaussian or Laplace priors\cite{Bouman.Sauer1993,Choi.Baraniuk1999,Babacan.etal2010}.  These models, however, are tightly linked to the choice of a given sparsifying transform, with the downside that they do not provide further insights on the true nature of the signal.

\subsection{Contributions}

In this paper, we revisit the signal reconstruction problem by specifying upfront a \textit{continuous-domain model} for the signal that is independent from the subsequent reconstruction task/algorithm and apply a \textit{proper discretization scheme} to derive the corresponding MAP estimator. Our approach builds upon the theory of continuous-domain sparse stochastic processes~\cite{Unser.etal2011a}.  In this framework, the stochastic process is defined through an innovation model that can be driven by a \textit{non-Gaussian} excitation~\footnote{It is noteworthy that the theory includes the stationary Gaussian processes.}. The primary advantage of our continuous-domain formulation is that it lends itself to an analytical treatment. In particular, it allows for the derivation of the probability density function (pdf) of the signal in any transform domain, which is typically much more difficult in a purely discrete framework. Remarkably, the underlying class of models also provides us with a strict derivation of the class of admissible regularization functionals which happen to be confined to two categories: Gaussian or sparse. 

The main contributions of the present work are as follows:
\begin{itemize}
\item The introduction of  continuous-domain stochastic models in the formulation of inverse problems. This leads to the use of non-quadratic reconstruction schemes.
\item A general scheme for the proper discretization of these problems. This scheme specifies feasible statistical estimators.
\item The characterization of the complete class of admissible potential functions (prior log-likelihoods) and the derivation of the corresponding MAP estimators. The connections between these estimators and the existing deterministic methods such as TV and $\ell_1$ regularizations are also explained. 
\item A general reconstruction algorithm, based on variable-splitting techniques, that handles different estimators, including the
    nonconvex ones. The algorithm is applied to deconvolution and to the reconstruction of MR and X-ray images.
\end{itemize}

\subsection{Outline}
The paper is organized as follows: In Section~\ref{sec:MeasurementModel}, we explain the acquisition model and obtain the corresponding representation of the signal $\sbf$ and the system matrix $\h$. In Section~\ref{sec:SparseStochasticModels}, we introduce the continuous-domain innovation model that defines a generalized stochastic process. We then statistically specify the discrete-domain counterpart of the innovation model and characterize admissible prior distributions. Based on this characterization,  we derive the MAP estimation as an optimization problem  in Section~\ref{sec:MAP}. In Section~\ref{sec:algorithm}, we provide an efficient algorithm to solve the optimization problem for a variety of admissible priors. Finally, in Section~\ref{sec:numericalResults}, we illustrate our discretization procedure by applying it to a series of deconvolution and of  MR and X-ray image-reconstruction problems. This allows us to compare the effect of different sparsity priors on the solution. 

\subsection{Notations}
Throughout the paper, we assume that the measurement noise is AWGN of variance $\sigma^2$. The input argument $\bx\in\mathbb{R}^d$ of the continuous-domain signals is written inside parenthesis (e.g., $s(\bx)$) whereas, for discrete-domain signals, we employ $\bk\in\mathbb{Z}^d$ and use brackets (e.g., $s[\bk]$). The scalar product is represented by $\langle \cdot, \cdot\rangle$ and $\delta(\cdot)$ denotes the Dirac impulse. 

\section{Measurement Model}\label{sec:MeasurementModel}

\begin{figure}[tb]	
	\begin{minipage}[b]{1.0\linewidth}
  		\centering
 			 \centerline{\includegraphics[width=8.5cm]{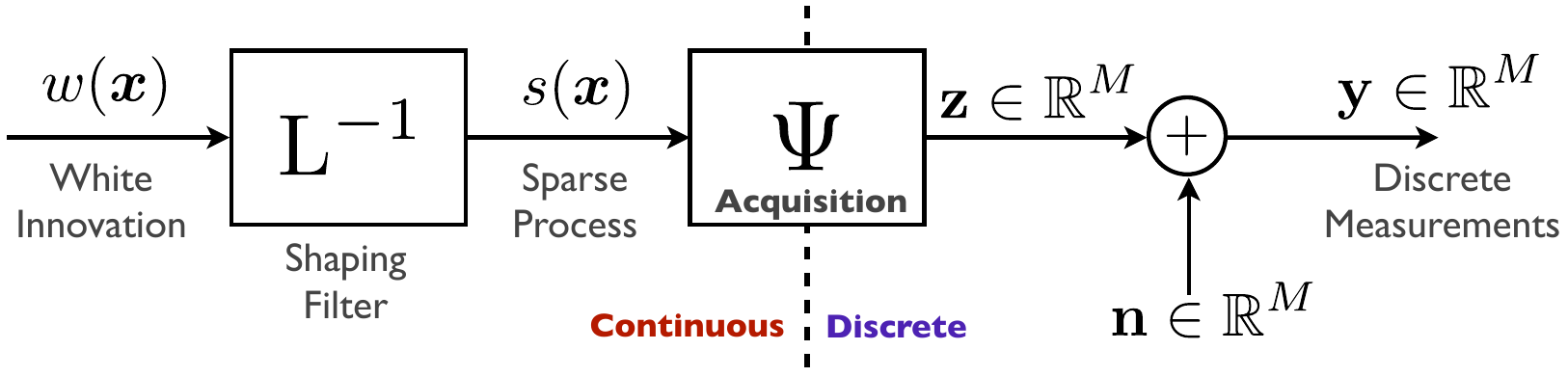}}
	\end{minipage}
	\caption{General form of the linear, continuous-domain measurement model considered in this paper. The signal $s(\bx)$ is acquired through linear measurements of the form $z_m = \left[\mathrm{\Psi}s\right]_m =\langle s, \psi_m \rangle$. The resulting vector $\zbf \in \R^M$ is corrupted with AWGN. Our goal is to estimate the original signal $s$ from noisy measurements $\ybf$ by exploiting the knowledge that $s$ is a realization of a sparse stochastic process that satisfies the innovation model $\Lop s = w$, where $w$ is a non-Gaussian white innovation process.}
	\label{fig:SignalModel}
\end{figure}

In this section, we develop a discretization scheme that allows us to obtain a tractable representation of continuously-defined measurement problem, with minimal loss of information. Such discrete representation is crucial since the resulting reconstruction algorithms are implemented numerically.

\subsection{Discretization of the Signal}\label{subsec:Discretization}

To obtain a clean analytical discretization of the problem, we consider the
generalized sampling approach using ``shift-invariant''  reconstruction spaces~\cite{Unser.2000}. The advantage of such a representation is that it offers the same type of error control as finite-element methods (i.e., one can make the discretization error arbitrarily small by making the reconstruction grid sufficiently fine).

The idea is to represent the signal $s$ by projecting it onto a reconstruction space. We define our reconstruction space at resolution $T$ as
\begin{small}
\begin{equation}
V_T(\varphi_{\rm int})=\left\{ s_T(\bx) = \sum_{\bk \in \Z^{d}} s\left[\bk\right] \varphi_{\rm int}\left(\frac{\bx}{T} - \bk\right) : s[\bk] \in \ell_\infty(\Z^d)\right\}\hspace{-0.25em},
\end{equation}
\end{small}

\noindent where  $s[\bk]=s(\bx)|_{\bx=T\bk}$, and $\varphi_{\rm int}$ is an interpolating basis function positioned on the reconstruction grid $T\Z^d$.  The interpolation property is $\varphi_{\rm int}(\bk)=\delta[\bk]$. For the representation of $s$ in terms of its samples $s[\bk]$ to be stable and unambiguous, $\varphi_{\rm int}$ has to be a valid Riesz basis for $V_T(\varphi_{\rm int})$. Moreover, to guarantee that the approximation error decays as a function of $T$, the basis function should satisfy the partition of unity property~\cite{Unser.2000}
\begin{equation}
\sum_{\bk \in \Z^d} \varphi_{\rm int}(\bx - \bk) = 1,~\forall \bx \in \R^d\text{.}
\end{equation}
The projection of the signal onto the reconstruction space $V_T(\varphi_{\rm int})$ is given by
\begin{equation}
P_{V_T}s(\bx) = \sum_{\bk \in \Z^d} s(T\bk)\varphi_{\rm int}\left(\frac{\bx}{T} - \bk\right)\hspace{-0.25em},
\end{equation}
with the property that $P_{V_T}P_{V_T}s(\bx) = P_{V_T}s(\bx)$ (since $P_{V_T}$ is a projection operator). To simplify the notation, we shall use a unit sampling $T = 1$ with the implicit assumption that the sampling error is negligible. (If the sampling error is large, one can use a finer sampling and rescale the reconstruction grid appropriately.) Thus, the resulting discretization is
\begin{equation}
\label{equ:Expansion}
s_1(\bx) = P_{V_1}s(\bx) = \sum_{\bk \in \Z^d} s[\bk] \varphi_{\rm int}(\bx - \bk)\text{.}
\end{equation}
To summarize, $s_1(\bx)$ is the discretized version of the original signal $s(\bx)$ and it is uniquely described by the samples $s[\bk]=s(\bx)|_{\bx=\bk}$ for ${\bk \in \Z^d}$.  The main point is that the reconstructed signal is represented in terms of samples even though the problem is still formulated in the continuous-domain.
\subsection{Discrete Measurement Model}

By using the discretization scheme in~\eqref{equ:Expansion}, we are now ready to formally link the continuous model in Figure~\ref{fig:SignalModel} and the corresponding discrete linear-inverse problem. Although the signal representation~\eqref{equ:Expansion} is an infinite sum, in practice we restrict ourselves to a subset of $N$ basis functions with $\bk \in \Omega$,  where $\Omega$ is a discrete set of integer coordinates in a region of  interest (ROI).  Hence, we rewrite~\eqref{equ:Expansion} as
\begin{equation}
s_1(\bx) =  \sum_{\bk \in \Omega} s[\bk] \varphi_{\bk}(\bx),
\end{equation}
where $\varphi_{\bk}(\bx)$ corresponds to $\varphi_{\rm int}(\bx-\bk)$ up to modifications at the boundaries (periodization or Neumann boundary condition). 

We first consider a noise-free signal acquisition. The general form of a linear, continuous-domain noise-free measurement system is
\begin{equation}
\label{equ:ContinuousMeasurement}
z_m = \int_{\R^d} s(\bx)\psi_m(\bx)\mathrm{d}\bx, \,\, (m = 1, \dots, M)
\end{equation}
where $s(\bx)$ is the original signal, and the measurement function $\psi_m(\bx)$ represents the spatial response of the $m$th detector which is application dependent as we shall explain in Section~\ref{sec:numericalResults}.

By substituting the signal representation~\eqref{equ:Expansion} into~\eqref{equ:ContinuousMeasurement}, we discretize the measurement model and write it in matrix-vector form as
\begin{equation}
\ybf = \zbf + \n = \h\sbf + \n,
\end{equation}
where $\ybf$ is the $M$-dimensional measurement vector, $\sbf = \left(s[\bk]\right)_{\bk \in \Omega}$ is the $N$-dimensional signal vector, $\n$ is the $M$-dimensional noise vector, and $\h$ is the $M \times N$ system matrix whose entry $(m,\bk)$ is given by
\begin{equation}
\label{equ:matrixVector}
\left[\h\right]_{m, \bk} = \langle \psi_m, \varphi_\bk\rangle = \int_{\R^d} \psi_m(\bx)\varphi_\bk(\bx)\mathrm{d}\bx.
\end{equation}
This allows us to specify the discrete linear forward model given in \eqref{eq:discreteLinearModel} which is compatible with the continuous-domain formulation. The solution of this problem yields the representation $s_1(\bx)$ of $s(\bx)$ which is parameterized in terms of the signal samples $\sbf$. Having the forward model explained, our next aim is to obtain the statistical distribution of $\sbf$.

\section{Sparse Stochastic Models}\label{sec:SparseStochasticModels}
We now proceed by introducing our stochastic framework which will provide us with a signal prior. For that purpose,  we assume that $s(\bx)$ is a realization of a stochastic process that is defined as the solution of a linear stochastic differential equation (SDE) with a  driving term that is not necessarily Gaussian. Starting from such a continuous-domain model, we aim at obtaining the statistical distribution of the sampled version of the process (discrete signal) that will be needed to formulate estimators for the reconstruction problem.

\subsection{Continuous-Domain Innovation Model}
As mentioned in Section~\ref{sec:Introduction}, we specify our relevant class of signals as the solution of an SDE in which the process $s$ is assumed to be whitened by a linear operator. This model takes the form
\begin{equation}\label{continuousInnovation}
    \Lop s=w\text{,}
\end{equation}
where  $w$ is a continuous-domain white innovation process (the driving term), and ${\rm L}$ is a (multidimensional) differential operator. The right-hand side of (\ref{continuousInnovation}) represents the unpredictable part of the process, while $\Lop$ is called the whitening operator. Such models are standard in the classical theory of stationary Gaussian processes~\cite{Papoulis.1991}. The twist here is that the driving term $w$ is not necessarily Gaussian. Moreover, the underlying differential system is potentially unstable to allow for self-similar models. 

In the present model, the process $s$ is characterized 
by the formal solution $s={\rm L}^{-1}w$, where ${\rm L}^{-1}$ is an appropriate right inverse of $\Lop$. The operator ${\rm L}^{-1}$ amounts to some generalized ``integration" of the innovation $w$. The implication is that the correlation structure of the stochastic process $s$ is determined by the shaping operator $\Lop^{-1}$, whereas its statistical properties and sparsity structure is determined by the driving term $w$.   As an example in the one-dimensional setting, the operator $\Lop$ can be chosen as the first-order continuous-domain derivative operator  $\Lop=\Dop$. For multidimensional signals, an attractive class of operators is the fractional Laplacian $(-\Delta)^{\frac{\gamma}{2}}$ which is invariant to translation, dilation, and rotation in $\R^d$~\cite{Sun.Unser2012}. This operator gives rise to ``$1/ \| \boldsymbol{\omega} \|^\gamma$''-type power spectrum and is frequently used to model certain types of images~\cite{Mandelbrot.1983,Huang.Mumford1999}.

The mathematical difficulty is that the innovation $w$ cannot be interpreted as an ordinary function because it is highly singular. The proper framework for handling such singular objects is Gelfand and Vilenkin's theory of generalized stochastic processes~\cite{Gelfand.Vilenkin1964}.  In this framework, the stochastic process $s$ is observed by means of scalar-products $\langle s,\varphi \rangle$ with $\varphi \in \mathcal{S}(\R^d)$, where $\mathcal{S}(\R^d)$ denotes the Schwartz class of rapidly decreasing test functions. Intuitively, this is analogous to measuring an intensity value at a pixel after integration through a CCD detector.

A fundamental aspect of the theory is that the driving term $w$ of the innovation model~\eqref{continuousInnovation} is uniquely specified in terms of its L\'evy exponent $f(\cdot)$. 
\begin{defn}\label{def:levyfunc}
A complex-valued function $f: \R \rightarrow \C$ is a valid L\'evy exponent iff. it satisfies the three following conditions:
\begin{enumerate}
\item it is continuous;
\item it vanishes at the origin;
\item it is conditionally positive-definite of order one in the sense that
$$
\sum_{m=1}^N \sum_{n=1}^N f(\omega_m-\omega_n) \xi_m\overline{\xi}_n \ge 0
$$
under the condition $\sum_{m=1}^N \xi_m=0$ for every possible choice of $\omega_1,\dots,\omega_N \inR$, $\xi_1,\dots,\xi_N \in \C$, and $N \in \N\setminus\{0\}$. 
\end{enumerate}
\end{defn}
An important subset of L\'evy exponents are the $p$-admissible ones, which are central to our formulation.
\begin{defn}
A L\'evy exponent $f$ with derivative $f'$ is called $p$-admissible if it satisfies the inequality
\begin{equation*}
|f(\omega)|+|\omega| |f'(\omega)| \leq C |\omega|^p
\end{equation*}
for some constant $C>0$ and $0<p\leq2$.
\end{defn}
A typical example of a $p$-admissible L\'evy exponent is $f(\omega)=-s_0|\omega|^\alpha$ with $s_0 > 0$. The simplest case is $f_{\rm Gauss}(\omega)=-\half|\omega|^2$; it will be used to specify Gaussian processes.

Gelfand and Vilenkin have characterized the whole class of continuous-domain white innovation and have shown that they are fully specified by the generic characteristic form 
\begin{align}
    \CH_w(\varphi) & = \mathbb{E}\left\lbrace {\rm e}^{{\rm j}\langle w,\varphi \rangle}\right\rbrace \nonumber \\
    			        & = \exp\left( \int_{\R^d} f(\varphi(\bx)){\rm d}\bx\right)\hspace{-0.25em }\text{,} \label{eq:CharForm}
\end{align}
where $f$ is the corresponding L\'evy exponent of the innovation process $w$. The powerful aspect of this characterization is that $\CH_w$ is indexed by a test function $\varphi\in\mathcal{S}$ rather than by a scalar (or vector) Fourier variable $\omega$. As such, it constitutes the {\em infinite-dimensional} generalization of the characteristic function of a conventional random variable. 

Recently, Unser et al. characterized the class of stochastic processes that are solutions of (\ref{continuousInnovation}) where $\Lop$ is a linear shift-invariant (LSI) operator and $w$ is a member of the class of so-called L\'evy noises~\cite[Theorem 3]{Unser.etal2011a}. 
\begin{theo}\label{theo:Unser.etal}
Let $w$ be a L\'evy noise as specified by~\eqref{eq:CharForm} and $\Lop^{-1*}$ be a left inverse of the adjoint operator $\Lop^\ast$ such that
either one of the conditions below is met:
\begin{enumerate}
\item $\Lop^{-1*}$ is a continuous linear map from $\mathcal{S}(\R^d)$ into itself;
\item $f$ is $p$-admissible and $\Lop^{-1*}$ is a continuous linear map from $\mathcal{S}(\R^d)$ into $L_p(\R^d)$; that is,
\begin{equation*}
\|\Lop^{-1*}\varphi\|_{L_p}<C \|\varphi\|_{L_p}\text{,} \quad \forall \varphi \in \mathcal{S}(\R^d) 
\end{equation*}
for some constant $C$ and some $p\geq1$.
\end{enumerate} 
Then, $s=\Lop^{-1}w$ is a well-defined generalized stochastic process over the space of tempered distributions $\mathcal{S}^{\prime}(\R^d)$ and is uniquely characterized by its characteristic form
\begin{equation}\label{eq:charform}
\CH _s(\varphi)=\mathbb{E}\left\lbrace {\rm e}^{{\rm j}\langle s,\varphi \rangle}\right\rbrace=\exp\left( \int_{\mathbb{R}^d} f\big(\Lop^{-1*}\varphi(\bx)\big){\rm d}\bx \right)\hspace{-0.25 em}\text{.}
\end{equation}
It is a (weak) solution of the stochastic differential equation $\Lop s=w$ in the sense that $\langle \Lop s,\varphi \rangle=\langle w,\varphi \rangle$ for all $\varphi \in \mathcal{S}(\R^d)$.
 \end{theo}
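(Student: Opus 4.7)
The plan is to construct $s$ as a random element of $\mathcal{S}'(\R^d)$ by prescribing its characteristic functional and then appealing to the Minlos--Bochner existence theorem. The heuristic duality $\langle \Lop^{-1}w,\varphi\rangle=\langle w,\Lop^{-1*}\varphi\rangle$ dictates the candidate
$$\CH_s(\varphi):=\CH_w(\Lop^{-1*}\varphi)=\exp\!\left(\int_{\R^d} f\bigl(\Lop^{-1*}\varphi(\bx)\bigr)\,{\rm d}\bx\right)\!,$$
so the entire task reduces to verifying that this functional is a legitimate characteristic functional on $\mathcal{S}(\R^d)$; the two alternative hypotheses on $\Lop^{-1*}$ are precisely what is needed to make the right-hand side meaningful.

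First I would check that $\CH_s(\varphi)$ is finite for every $\varphi\in\mathcal{S}(\R^d)$. Under condition (1) this is automatic, since $\Lop^{-1*}\varphi\in\mathcal{S}(\R^d)$ and the Gelfand--Vilenkin formula for $w$ already guarantees convergence of the exponent. Under condition (2), the $p$-admissibility estimate $|f(\omega)|\le C|\omega|^p$ combined with $\Lop^{-1*}\varphi\in L_p(\R^d)$ gives the absolute bound
$$\int_{\R^d}\bigl|f(\Lop^{-1*}\varphi(\bx))\bigr|\,{\rm d}\bx\le C\,\|\Lop^{-1*}\varphi\|_{L_p}^p<\infty.$$
Normalization $\CH_s(0)=1$ then follows from $f(0)=0$, and positive-definiteness descends from that of $\CH_w$ by linearity of $\Lop^{-1*}$:
$$\sum_{m,n}\CH_s(\varphi_m-\varphi_n)\xi_m\overline{\xi}_n=\sum_{m,n}\CH_w\bigl(\Lop^{-1*}\varphi_m-\Lop^{-1*}\varphi_n\bigr)\xi_m\overline{\xi}_n\ge 0.$$

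The delicate step, and the one I expect to be the main obstacle, is continuity of $\CH_s$ on $\mathcal{S}(\R^d)$. Under condition (1) this is immediate, since $\Lop^{-1*}:\mathcal{S}(\R^d)\to\mathcal{S}(\R^d)$ is continuous by hypothesis and $\CH_w$ is continuous on $\mathcal{S}(\R^d)$. Under condition (2) one must work harder: combining the $p$-admissibility bound, the elementary pointwise estimate $|{\rm e}^{{\rm j}u}-1|\le\min(2,|u|)$, and the operator bound $\|\Lop^{-1*}\varphi\|_{L_p}\le C\|\varphi\|_{L_p}$, one shows that a Schwartz null sequence $\varphi_n\to 0$ forces $\|\Lop^{-1*}\varphi_n\|_{L_p}\to 0$, whence $\int f(\Lop^{-1*}\varphi_n)\,{\rm d}\bx\to 0$ by a dominated-convergence argument. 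The subtlety is in choosing the right majorant that tames the interplay between the Schwartz topology on $\varphi$ and the $L_p$ topology on $\Lop^{-1*}\varphi$. Once continuity is established, Minlos--Bochner yields a unique probability measure on $\mathcal{S}'(\R^d)$ and hence the desired generalized process $s$ with characteristic form \eqref{eq:charform}. The weak identity $\Lop s=w$ is then a one-line verification: using $\langle\Lop s,\varphi\rangle=\langle s,\Lop^*\varphi\rangle$ and the left-inverse relation $\Lop^{-1*}\Lop^*\varphi=\varphi$ on $\mathcal{S}(\R^d)$, one computes
$$\mathbb{E}\bigl[{\rm e}^{{\rm j}\langle\Lop s,\varphi\rangle}\bigr]=\CH_s(\Lop^*\varphi)=\exp\!\left(\int_{\R^d}f(\varphi(\bx))\,{\rm d}\bx\right)=\CH_w(\varphi),$$
so $\Lop s$ and $w$ share the same characteristic form, hence the same law.
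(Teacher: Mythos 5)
The paper does not prove this theorem itself: it imports it verbatim from Unser et al.\ (cited as Theorem~3 of the reference on sparse stochastic processes), so there is no in-paper proof to compare against. Your proposal reconstructs the standard argument used in that reference --- define the candidate functional $\CH_s(\varphi)=\CH_w(\Lop^{-1*}\varphi)$, verify normalization, positive-definiteness, and continuity on the nuclear space $\mathcal{S}(\R^d)$, invoke Minlos--Bochner, and then check the weak identity $\Lop s=w$ via the left-inverse relation $\Lop^{-1*}\Lop^{*}\varphi=\varphi$. That is the right skeleton and the concluding computation $\CH_s(\Lop^{*}\varphi)=\CH_w(\varphi)$ is exactly how the weak-solution claim is established.

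One step deserves more care than you give it. Under condition (2), $\Lop^{-1*}\varphi$ lands in $L_p(\R^d)$ but generally \emph{not} in $\mathcal{S}(\R^d)$, whereas $\CH_w$ is a priori only defined on $\mathcal{S}(\R^d)$. So before you can write $\CH_w(\Lop^{-1*}\varphi_m-\Lop^{-1*}\varphi_n)$ and say positive-definiteness ``descends by linearity,'' you must first extend $\CH_w$ continuously from $\mathcal{S}(\R^d)$ to $L_p(\R^d)$; this is where the full $p$-admissibility condition (including the derivative term $|\omega|\,|f'(\omega)|\le C|\omega|^p$, which controls $|f(u)-f(v)|$ for nearby arguments) and the density of $\mathcal{S}$ in $L_p$ are actually used, and positive-definiteness of the extension is then preserved under the pointwise limit. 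You flag continuity as the delicate step, which is the right instinct, but the extension of the domain of $\CH_w$ is the precise technical content of case (2) and should be stated as a separate lemma rather than folded into a dominated-convergence remark. With that addition the argument is complete and matches the cited proof in both structure and substance.
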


Before we move on, it is important to emphasize that L\'evy exponents are in one-to-one correspondence with the so-called infinitely divisible (i.d.) distributions~\cite{Sato.1994}. 
\begin{defn}
A generic pdf $p_X$ is infinitely divisible if, for any positive integer $n$, it can be represented as the $n$-fold convolution $(p\ast \dots \ast p)$ where $p$ is a valid pdf.
\end{defn}
\begin{theo}[L\'evy-Schoenberg] 
\label{Th:levySchoenberg}
Let $\hat p_{X}(\omega)=\mathbb{E}\{{\rm e}^{{\rm j} \omega X}\}=\int_\R {\rm e}^{{\rm j} \omega x} p_{X}(x)\dint x$ be the characteristic function of an
infinitely divisible  random variable $X$. Then,
$$f(\omega)=\log \hat p_X(\omega)$$
is a L\'evy exponent in the sense of Definition \ref{def:levyfunc}. Conversely, if $f(\omega)$ is a valid L\'evy exponent, then the inverse Fourier integral $$
p_X(x)=\int_\R {\rm e}^{f(\omega)} {\rm e}^{-{\rm j}\omega x} \frac{ \dint \omega}{2 \pi}
$$
yields the pdf of an i.d. random variable.
\end{theo}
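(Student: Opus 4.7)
The plan is to establish the two directions by combining two classical results: Schoenberg's theorem, which links conditionally positive-definite functions to positive-definite exponentials, and Bochner's theorem, which identifies continuous positive-definite functions with characteristic functions of probability measures. The conceptual bridge is that, given continuity and $f(0)=0$, the conditional positive-definiteness of $f$ in Definition~\ref{def:levyfunc} is equivalent to $e^{tf}$ being positive-definite (in the ordinary sense) for every $t>0$.

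For the forward direction, assume $X$ is infinitely divisible with characteristic function $\hat p_X$. A classical consequence of infinite divisibility is that $\hat p_X$ is nowhere-vanishing, so $f(\omega)=\log \hat p_X(\omega)$ is unambiguously defined on the principal branch, continuous, and satisfies $f(0)=0$. Moreover, for every positive integer $n$ there is a characteristic function $\hat p_n$ with $\hat p_n^{\,n}=\hat p_X$, so $e^{tf}=\hat p_X^{\,t}$ is itself a characteristic function for every rational $t>0$ and, by continuity, for every real $t>0$; in particular it is positive-definite. To extract the condition of Definition~\ref{def:levyfunc}, fix $\omega_1,\dots,\omega_N\inR$ and $\xi_1,\dots,\xi_N\inC$ with $\sum_m\xi_m=0$, and write
\begin{equation*}
\sum_{m,n}\frac{e^{tf(\omega_m-\omega_n)}-1}{t}\,\xi_m\bar\xi_n \;=\; \frac{1}{t}\sum_{m,n}e^{tf(\omega_m-\omega_n)}\,\xi_m\bar\xi_n \;\ge\; 0,
\end{equation*}
where the $-1$ term drops out thanks to $\sum_m\xi_m=0$ and the inequality is positive-definiteness of $e^{tf}$. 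Letting $t\to 0^+$, the quotient converges pointwise to $f(\omega_m-\omega_n)$, and the condition of Definition~\ref{def:levyfunc} follows.

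For the converse, suppose $f$ satisfies the three conditions of Definition~\ref{def:levyfunc}. Schoenberg's theorem then guarantees that $e^{tf}$ is a continuous positive-definite function taking the value $1$ at the origin, for every $t>0$. Applying Bochner's theorem with $t=1$ yields a probability measure on $\R$ whose characteristic function equals $e^f$; whenever $e^f\in L^1(\R)$, this measure admits a density that is recovered pointwise by the stated inverse Fourier integral. Infinite divisibility is automatic: for every $n\in\N\setminus\{0\}$, the function $f/n$ still satisfies the three axioms of a Lévy exponent, so $e^{f/n}$ is the characteristic function of some probability density $p_n$, and the identity $e^f=(e^{f/n})^n$ translates into $p_X=p_n\ast\cdots\ast p_n$ ($n$ factors), which is precisely the defining property of an i.d.\ distribution.

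The main obstacle is the Schoenberg step used in the converse: deriving the global positive-definiteness of $e^{tf}$ from the algebraic condition of Definition~\ref{def:levyfunc} is the nontrivial content. A self-contained argument would write $e^{tf}=\lim_{k\to\infty}(1+tf/k)^k$, expand the $k$-th power into sums of translates of the form $\sum_n f(\omega_m-\omega_n)\xi_m\bar\xi_n$, and exploit the closure of positive-definite functions under pointwise limits, using $\sum_m\xi_m=0$ to absorb the constant terms. Secondary technicalities are the non-vanishing of $\hat p_X$ invoked in the forward part (a standard fact for i.d.\ laws, provable via $|\hat p_X|^{2/n}\to \mathbf 1_{\{\hat p_X\neq 0\}}$) and the integrability hypothesis on $e^f$ implicit in writing $p_X$ as a pointwise inverse Fourier integral rather than as a tempered distribution.
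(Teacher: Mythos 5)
The paper offers no proof of this theorem: it is quoted as a classical result (with the surrounding citations to Sato and to Steutel--van Harn), so there is no in-paper argument to compare yours against. Judged on its own terms, your outline is the standard textbook route --- forward direction via the difference quotient $(e^{tf}-1)/t$ with the constant term killed by $\sum_m\xi_m=0$, converse via Schoenberg plus Bochner, and infinite divisibility from the fact that $f/n$ is again a L\'evy exponent --- and all of those steps are sound. You also correctly flag the two standard technicalities (nonvanishing of $\hat p_X$ so that a continuous branch of $\log$ exists, and integrability of $e^f$ so that the measure produced by Bochner actually has a pointwise density given by the inverse Fourier integral).

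The one place where your sketch would not survive being written out is the proposed self-contained proof of the Schoenberg step. The function $1+tf/k$ is in general neither positive definite nor bounded in a way that lets you pass positive definiteness through the $k$-th power and the limit, so ``expand $(1+tf/k)^k$ and absorb constants using $\sum_m\xi_m=0$'' does not close as stated. The standard argument instead \emph{centers}: one shows that the kernel $K(\omega,\omega')=f(\omega-\omega')-f(\omega)-\overline{f(\omega')}$ is a genuinely positive-definite kernel (apply the conditional positive definiteness to the augmented family obtained by adjoining the point $0$ with weight $\xi_0=-\sum_{m}\xi_m$), deduces that $e^{tK}$ is positive definite via Schur's product theorem applied term by term to the exponential series, and then writes $e^{tf(\omega-\omega')}=e^{tK(\omega,\omega')}\,e^{tf(\omega)}\,\overline{e^{tf(\omega')}}$, a product of positive-definite kernels. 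Note that this last identity, and indeed the reality of the quadratic form in Definition~\ref{def:levyfunc}, uses the Hermitian symmetry $f(-\omega)=\overline{f(\omega)}$, which you should either derive from the definition (it follows by testing the form on suitable pairs and triples of points) or state as an implicit hypothesis. With that repair the converse is complete; the rest of your argument stands.
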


Another important theoretical result is that it is possible to specify the complete family of i.d. distributions thanks to the celebrated L\'evy-Khintchine representation~\cite{Steutel.Harn2004} which provides a constructive method for defining L\'evy exponents. This tight connection will be essential for our formulation and limits us to a certain family of prior distributions.

\subsection{Statistical Distribution of Discrete Signal Model}
The interest is now to statistically characterize the discretized signal described in Section~\ref{subsec:Discretization}. To that end, the first step is to formulate a discrete version of the continuous-domain innovation model~\eqref{continuousInnovation}. Since, in practical applications, we are only given the samples $s[\bk]_{\bk \in \Omega}$ of the signal, we obtain the discrete-domain innovation model by applying to them the discrete counterpart  ${\rm L}_{\rm d}$ of the whitening operator ${\rm L}$. The fundamental requirement for our formulation is that the composition of $\Lop_{\rm d}$ and $\Lop^{-1}$ results in a stable, shift-invariant operator whose impulse response is well localized~\cite{Unser.etal2011b}
\begin{equation}\label{eq:Bspline}
\left( {\rm L}_{\rm d} {\rm L}^{-1}\delta \right) (\bx)=\beta_{\rm L}(\bx) \in L_1(\R^d)\text{.}
\end{equation}
The function $\beta_\Lop$ is the generalized B-spline associated with the operator $\Lop$. Ideally, we would like it to be maximally localized. 

To give more insight, let us consider $\Lop=\Dop$ and $\Lop_{\rm d}=\Dop_{\rm d}$ (the finite-difference operator associated to \Dop). Then, the associated B-spline is $\beta_\Dop(x)=\Dop_{\rm d}\mathds{1}_+(x)=\mathds{1}_+(x)-\mathds{1}_+(x-1)$ where $\mathds{1}_+(x)$ is the  unit step (Heaviside) function. Hence, $\beta_{\Dop}(x)={\rm rect}(x-\half)$ is a causal rectangle function (polynomial B-spline of degree $0$). 

The practical consequence of~\eqref{eq:Bspline} is
\begin{align}\label{eq:discreteInnovations}
u={\rm L}_{\rm d}s = {\rm L}_{\rm d}\Lop^{-1}w= \beta_\Lop \ast w\text{.}		   
\end{align}
Since $(\beta_\Lop \ast w)(\bx)=\langle w,\beta_\Lop^\vee(\cdot-\bx)  \rangle$ where $\beta_\Lop^\vee(\bx)=\beta_\Lop(-\bx)$ is the space-reversed version of $\beta_\Lop$, it can be inferred from~\eqref{eq:discreteInnovations} that the evaluation of the samples of ${\rm L}_{\rm d}s$ is equivalent to the observation of the innovation through a B-spline window. 

From a system-theoretic point of view, $\Lop_{\rm d}$ is understood as a finite impulse response (FIR) filter. This impulse response is of the form $\sum_{\bk \in \Omega}d[\bk]\delta(\cdot-\bk)$ with some appropriate weights $d[\bk]$. Therefore, we write the discrete counterpart of the continuous-domain innovation variable as
\begin{equation*}
u[\bk] = {\rm L}_{\rm d}s(\bx)|_{\bx=\bk}=\sum_{\bk^{\prime}\in\Omega}d[\bk^{\prime}]s(\bk-\bk^{\prime})\text{.}
\end{equation*}
This allows us to write  in matrix-vector notation the discrete-domain version of the innovation model (\ref{continuousInnovation}) as
\begin{equation} 
\ubf=\Lbf\sbf\text{,}
\end{equation}
where $\sbf=\left(s[\bk]\right)_{\bk \in \Omega}$ represents the discretization of the stochastic model with $s[\bk]=s(\bx)|_{\bx=\bk}$ for ${\bk \in \Omega}$, $\Lbf:\mathbb{R}^N \rightarrow \mathbb{R}^N$ is the matrix representation of $\Lop_{\rm d}$, and $\ubf=\left( u[\bk]\right)_{\bk \in \Omega}$ is the discrete innovation vector.

We shall now rely on~\eqref{eq:charform} to derive the pdf of the discrete innovation variable, which is one of the key results of this paper.
\begin{theo}\label{theo:infiniteDivisible}
Let $s$ be a stochastic process whose characteristic form is given by~\eqref{eq:charform} where $f$ is a $p$-admissible L\'evy exponent, and $\beta_\Lop=\Lop_{\rm d}\Lop^{-1}\delta\in L_p(\R^d)$ for some $p\in(0,2]$.
Then, $u=\Lop_{\rm d}s$ is stationary and infinitely divisible. Its first-order pdf is given by
\begin{equation}\label{eq:theoremResult}
p_U(u)=\int_\R \exp\left(f_{\beta_\Lop^\vee}(\omega)\right) {\rm e}^{{\rm j} \omega u} \frac{\dint \omega}{2 \pi}\text{,}
\end{equation}
with L\'evy exponent
\begin{equation}\label{eq:levyExponentSpline}
f_{\beta_\Lop^\vee}(\omega)= \log \hat p_U(\omega)=\int_{\R^d} f\big(\omega\beta_\Lop^\vee(\bx)\big) \dint \bx\text{,}
\end{equation}
which is $p$-admissible as well. 
\end{theo}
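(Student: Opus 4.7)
The plan is to compute the first-order characteristic function of $u[\bk]$ by representing it as an observation of the innovation $w$ through a spline window, then invoke the Lévy--Schoenberg theorem to recover the pdf and establish infinite divisibility. The hypotheses ($p$-admissibility of $f$ and $\beta_\Lop\in L_p$) exist precisely to ensure all the relevant integrals converge.

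First I would expand $u$ using the factorization \eqref{eq:Bspline}--\eqref{eq:discreteInnovations}. For any fixed $\bk$, write
$$u[\bk]=(\beta_\Lop\ast w)(\bk)=\langle w,\beta_\Lop^\vee(\cdot-\bk)\rangle.$$
This identifies the characteristic function of the single random variable $u[\bk]$ with the characteristic form of $w$ evaluated at the specific test function $\omega\beta_\Lop^\vee(\cdot-\bk)$:
$$\hat p_{u[\bk]}(\omega)=\mathbb{E}\bigl\{{\rm e}^{{\rm j}\omega\langle w,\beta_\Lop^\vee(\cdot-\bk)\rangle}\bigr\}=\CH_w\bigl(\omega\beta_\Lop^\vee(\cdot-\bk)\bigr).$$
Because $\beta_\Lop^\vee$ lies in $L_p$ rather than $\mathcal{S}$, the main technical point is to extend the representation \eqref{eq:CharForm} from Schwartz test functions to such $L_p$ functions. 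This extension is the second clause of Theorem~\ref{theo:Unser.etal} applied on the innovation side; $p$-admissibility gives the pointwise bound $|f(\omega\beta_\Lop^\vee(\bx))|\le C|\omega|^p|\beta_\Lop^\vee(\bx)|^p$, which is integrable by assumption, so the representation is valid by a density and continuity argument.

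Plugging this into \eqref{eq:CharForm} and performing the change of variables $\bx\mapsto\bx+\bk$ gives
$$\hat p_{u[\bk]}(\omega)=\exp\!\left(\int_{\R^d}f\bigl(\omega\beta_\Lop^\vee(\bx-\bk)\bigr)\dint\bx\right)=\exp\!\bigl(f_{\beta_\Lop^\vee}(\omega)\bigr),$$
which is independent of $\bk$; this simultaneously proves stationarity of $u$ and identifies its characteristic function. Inverse Fourier transforming yields \eqref{eq:theoremResult}. To invoke Theorem~\ref{Th:levySchoenberg} and conclude infinite divisibility, I still have to verify that $f_{\beta_\Lop^\vee}$ is itself a Lévy exponent in the sense of Definition~\ref{def:levyfunc}: continuity follows from dominated convergence (using the $p$-admissibility bound above); vanishing at the origin follows from $f(0)=0$; and conditional positive-definiteness transfers from $f$ by integrating the pointwise inequality in $\bx$ after taking $\omega_m\beta_\Lop^\vee(\bx)$ as the arguments.

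Finally I would check that $f_{\beta_\Lop^\vee}$ inherits $p$-admissibility. The estimate
$$|f_{\beta_\Lop^\vee}(\omega)|\le C|\omega|^p\|\beta_\Lop^\vee\|_{L_p}^p$$
is immediate; for the derivative term, differentiation under the integral sign (justified by the $p$-admissibility of $f'$ and $\beta_\Lop\in L_p$) gives $f'_{\beta_\Lop^\vee}(\omega)=\int\beta_\Lop^\vee(\bx)f'(\omega\beta_\Lop^\vee(\bx))\dint\bx$, and a pointwise application of $|\omega||f'(\omega\beta_\Lop^\vee)|\le C|\omega|^{p-1}|\beta_\Lop^\vee|^{p-1}$ (split into $|\beta_\Lop^\vee|\le 1$ and $|\beta_\Lop^\vee|>1$ if needed to stay inside $L_p$) yields a matching bound $|\omega||f'_{\beta_\Lop^\vee}(\omega)|\le C'|\omega|^p$. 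I anticipate the main obstacle to be the $L_p$-extension of the characteristic form in the first step and the careful handling of the derivative bound, whereas the rest is bookkeeping once the key identity $\hat p_{u[\bk]}(\omega)=\exp(f_{\beta_\Lop^\vee}(\omega))$ is established.
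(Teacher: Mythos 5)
Your proposal is correct and follows essentially the same route as the paper: identify $u[\bk]=\langle w,\beta_\Lop^\vee(\cdot-\bk)\rangle$, evaluate $\CH_w$ at $\omega\beta_\Lop^\vee$ (justified by continuity of $\CH_w$ on $L_p$ via $p$-admissibility), read off $\hat p_U=\exp(f_{\beta_\Lop^\vee})$, verify the three L\'evy-exponent axioms and the $p$-admissibility bound by pushing everything under the integral. The one point where you prove slightly less than is claimed is stationarity: showing that the law of $u[\bk]$ is independent of $\bk$ gives only first-order stationarity, whereas the paper first derives the full characteristic form $\CH_u(\varphi)=\CH_w(\beta_\Lop^\vee\ast\varphi)$ and applies the change of variables there, which yields $\CH_u(\varphi)=\CH_u(\varphi(\cdot-\bx_0))$ for all test functions and hence strict stationarity of the process; your change-of-variables argument extends verbatim to that setting, so the fix is immediate. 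A minor remark in your favour: your derivative estimate $|\omega|\,|f'_{\beta_\Lop^\vee}(\omega)|\le C|\omega|^p\|\beta_\Lop^\vee\|_{L_p}^p$ follows directly from $|f'(t)|\le C|t|^{p-1}$ with $t=\omega\beta_\Lop^\vee(\bx)$ multiplied by the extra factor $|\beta_\Lop^\vee(\bx)|$ from differentiating under the integral, so no case split on $|\beta_\Lop^\vee|$ is needed, and this is cleaner than the corresponding displayed bound in the paper.
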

\begin{proof}
Taking \eqref{eq:discreteInnovations} into account, we derive the characteristic form of $u$ which is given by
\begin{align}
\CH_u(\varphi)&=\mathbb{E}\{{\rm e}^{{\rm j} \langle u, {\varphi} \rangle} \}=\mathbb{E}\{{\rm e}^{{\rm j} \langle \beta_\Lop \ast w, {\varphi} \rangle} \}=\mathbb{E}\{{\rm e}^{{\rm j} \langle w,\beta_\Lop^\vee \ast {\varphi} \rangle} \} \nonumber\\
&=\CH_w(\beta_\Lop^\vee\ast\varphi) \nonumber\\
&=\exp\left( \int_{\R^d} f\big(\beta_\Lop^\vee  \ast\varphi(\bx)\big) \dint \bx\right).
\label{eq:charu}
\end{align}
The fact that $u$ is stationary is equivalent to $\CH_u(\varphi)=\CH_u\big(\varphi(\cdot-\bx_0)\big)$ for any $\bx_0\inR^d$, which is established by a simple change of variable in (\ref{eq:charu}).
We now consider the random variable $U=\langle u, \delta \rangle=\langle w, \beta_\Lop^\vee\rangle$. Its characteristic function is obtained as
\begin{align*}
\hat p_U(\omega)&
=\mathbb{E}\{{\rm e}^{{\rm j}  \omega U} \} =\mathbb{E}\{{\rm e}^{{\rm j} \langle w, \omega \beta_\Lop^\vee\rangle} \}\\
&=\CH_w(\omega\beta_\Lop^\vee) \\
&=\exp\left( f_{\beta_\Lop^\vee}(\omega)\right) 
\label{eq:padmod}
\end{align*}
where the substitution $\varphi=\omega\beta_\Lop^\vee$ in $\CH_w(\varphi)$ is valid since $\CH_w$ is a continuous
functional on $L_p(\R^d)$ as a consequence of the $p$-admissibility condition. To prove that $f_{\beta_\Lop^\vee}(\omega)$ is a $p$-admissible L\'evy exponent, we start by establishing the bound
\begin{align}
C \|\varphi\|_{L_p}^p |\omega|^p \ge&  \int_{\R^d} \left|f\big(\omega\beta_\Lop^\vee(\bx)\big)\right| \dint \bx \nonumber\\
&\, +|\omega| \int_{\R^d} \left| f'\big(\omega\beta_\Lop^\vee(\bx)\big)\varphi(\bx) \right|\dint \bx \nonumber\\
\ge& \left|f_{\beta_\Lop^\vee}(\omega)\right| + |\omega| \left|f'_{\beta_\Lop^\vee}(\omega)\right|,
\end{align}
which follows from the $p$-admissibility of $f$. We are also relying on Lebesgue's dominated convergence theorem to move the derivative with respect to $\omega$ inside the integral that defines $f_{\beta_\Lop^\vee}(\omega)$. In particular, (\ref{eq:padmod}) implies that $f_{\beta_\Lop^\vee}$ is continuous and vanishes at the origin. The last step is to establish its conditional positive definiteness which is achieved by interchanging the order of summation. We write
\begin{align}
&\sum_{m=1}^N \sum_{n=1}^N f_{\beta_\Lop^\vee}(\omega_m-\omega_n) \xi_m\overline{\xi}_n =  \\
&\int_{\R^d} \underbrace{\sum_{m=1}^N \sum_{n=1}^N f\big(\omega_m\beta_\Lop^\vee(\bx)-\omega_n\beta_\Lop^\vee(\bx)\big) \xi_m\overline{\xi}_n}_{\ge 0} \dint \bx \ge 0 \nonumber 
\end{align}
under the condition $\sum_{m=1}^N \xi_m=0$ for every possible choice of $\omega_1,\dots,\omega_N \inR$, $\xi_1,\dots,\xi_N \in \C,$ and $N \in \N\setminus\{0\}$.
\end{proof}

The direct consequence of Theorem~\ref{theo:infiniteDivisible} is that the primary statistical features of $\ubf$ is directly related to the continuous-domain innovation process $w$ via the L\'evy exponent. This implies that the sparsity structure (tail behavior of the pdf and/or presence of a mass distribution at the origin) is primarily dependent upon $f$. The important conceptual aspect, which follows from the L\'evy-Schoenberg theorem, is that the class of admissible pdfs is restricted to the family of i.d. laws since $f_{\beta_\Lop^\vee}(\omega)$, as given by~\eqref{eq:levyExponentSpline}, is a valid L\'evy exponent. We emphasize that this result is attained by taking advantage of considerations in the continuous-domain. 

\subsection{Specific Examples}
We now would like to illustrate our formalism by presenting some examples. If we choose $\Lop={\rm D}$, then the solution of~\eqref{continuousInnovation} with the boundary condition $s(0)=0$ is given by  
$$
s(x)=\int_0^x w(x^{\prime}){\rm d}x^{\prime}
$$
and is a L\'evy process. It is noteworthy that the L\'evy processes---a fundamental and well-studied family of stochastic processes---include Brownian motion and Poisson processes which are commonly used to model random physical phenomena~\cite{Sato.1994}. In this case, $\beta_{\Dop}(x)={\rm rect}(x-\half)$ and the discrete innovation vector $\ubf$ is obtained by
$$
u[k] = \langle w, {\rm rect}(\cdot+\half-k) \rangle,
$$
representing the so-called ``stationary independent increments''. Evaluating~\eqref{eq:levyExponentSpline} together with $f(0)=0$ (see Definition~\ref{def:levyfunc}), we  obtain
$$
f_{\beta_\Dop^\vee}(\omega)=\int_{-1}^0 f(\omega){\rm d}x=f(\omega).
$$
In particular, we generate L\'evy processes with Laplace-distributed increments by choosing $f(\omega)={\rm log}(\frac{\tau^2}{\tau^2+\omega^2})$ with the scale parameter $\tau>0$. To see that, we write ${\rm exp}(f_{\beta_\Dop^\vee}(\omega)) = \hat p_U(\omega) = \frac{\tau^2}{\tau^2+\omega^2}$ via~\eqref{eq:levyExponentSpline}. The inverse Fourier transform of this rational function is known to be 
$$
p_U(u)=\frac{\tau}{2}{\rm e}^{-\tau|u|}\text{.}
$$
Also, we rely on Theorem~\ref{theo:infiniteDivisible} in a more general aspect. For instance, a special case of  interest is the Gaussian (nonsparse) scenario where $f_{\rm Gauss}(\omega)=-\half|\omega|^2$. Therefore, one gets $f_{\beta_\Lop^\vee}(\omega) = \log \hat p_U(\omega) = -\half\omega^2\|\beta_{\rm L}\|_2^2$ from~\eqref{eq:levyExponentSpline}. Plugging this into~\eqref{eq:theoremResult}, we deduce that the discrete innovation vector is zero-mean Gaussian with variance $\| \beta_{\Lop}\|_2^2$ (i.e., $p_U(u) = \mathcal{N}(0,\| \beta_{\Lop} \|_2^2)$).

Additionally, when $f(\omega)=\frac{-|\omega|^\alpha}{2}$ with $\alpha\in[1,2]$, one finds that $f_{\beta_\Lop^\vee}(\omega) = \log \hat p_U(\omega) =  -  \frac{|\omega|^\alpha}{2} \| \beta_{\rm L} \|_{L_\alpha}^{\alpha}$. This indicates that $\ubf$ is a symmetric $\alpha$-stable ($S\alpha S$) distribution with scale parameter $s_0=\| \beta_{\rm L}\|_{L_\alpha}^\alpha$. For $\alpha=1$, we have the Cauchy distribution (or Student's with $r=1/2$). For other i.d. laws, the inverse Fourier transformation~\eqref{eq:theoremResult} is often harder to compute analytically, but it can still be performed numerically to determine $p_U(u)$ (or its corresponding potential function $\Phi_U=-{\rm log} \, p_U$ ). In general, $p_U$ will be i.d. and will typically imply heavy tails. Note that heavy-tailed distributions are compressible~\cite{Amini.etal2011}.

\section{Bayesian Estimation}\label{sec:MAP}
We now use the results of Section~\ref{sec:SparseStochasticModels} to derive solutions to the reconstruction problem in some well-defined statistical sense. To that end, we concentrate on the MAP solutions that are presently derived under the decoupling assumption that the components of $\ubf$ are independent and identically distributed (i.i.d.).  In order to reconstruct the signal, we seek an estimate of $\sbf$ that maximizes the posterior distribution $p_{S|Y}$ which depends upon the prior distribution $p_S$, assumed to be proportional to $p_U$ (since $\ubf=\Lbf \sbf$). The direct application of Bayes' rule is 
\begin{align*}
p_{S \mid Y}(\sbf \mid \y) &\propto p_N(\y-\h\sbf)p_U({\bf u}) \\
& \propto \exp\left( -\frac{\| \y-\h\sbf \|^2}{2\sigma^2} \right) \prod_{\bk\in\Omega} p_U\big([\Lbf\sbf]_\bk\big)\text{.}
\end{align*}
Then, we write the MAP estimation for $\sbf$ as
\begin{align}
\sbf_{\rm MAP}  
& =  
\arg\underset{\sbf}{\max} \; p_{S \mid Y}(\sbf \mid \y) \notag \\
& =  
 \arg\underset{\sbf}{\min} \; \left( \half\|\h\sbf-\y\|_2^2 +\sigma^2 \sum_{\bk \in \Omega } \Phi_U\big([\Lbf\sbf]_\bk\big)\right)\text{,} \label{eq:MAP}
\end{align}
where $\Phi_U(x)=-{\rm log} \, p_U(x)$ is called the \textit{potential function} corresponding to $p_U$. Note that~\eqref{eq:MAP} is compatible with the standard form of the variational reconstruction formulation given in~\eqref{eq:generalReconstructionFormula}. In the next section, we focus on the potential functions.

\begin{table*}
   \caption{Four members of the family of infinitely divisible distributions and the corresponding potential functions.}
     \begin{center}
     \begin{tabular*}{0.70\textwidth}[ht]{ c  c  c c}
     \hline\hline
                        & $ p_U(x) $                                          & $\Phi_U (x)$    & Property \\ \hline
                        
    Gaussian  &  $ \frac{1}{\sigma_0\sqrt{2\pi}}e^{-x^2/2\sigma_0^2} $ & $M_1x^2+C_1$  & smooth, convex\\ 
    
    Laplace   &  $ \frac{\tau}{2} e^{-\tau|x|}$ & $M_2|x|+C_2$ & nonsmooth, convex\\
    
    Student's &  $ \frac{1}{\epsilon B(r,\frac{1}{2})} \left({\frac{1}{(x/\epsilon)^2+1}}\right)^{r+\frac{1}{2}}$ &  $M_3{\rm log}\left(\frac{x^2+ \epsilon^2}{\epsilon^2}\right)+C_3$  & smooth, nonconvex\\
    
    Cauchy  & $\frac{1}{\pi s_0}\frac{1}{(x/s_0)^2+1}$ &${\rm log}(\frac{x^2+s_0^2}{s_0^2})+C_4$ & smooth, nonconvex \\
    \hline\hline
  \end{tabular*}
  \end{center}
  \label{potentials}  
\end{table*}
\begin{figure*}[t]
\centering
\subfigure[]{
\raisebox{1.9mm}
{\includegraphics[scale=0.5]{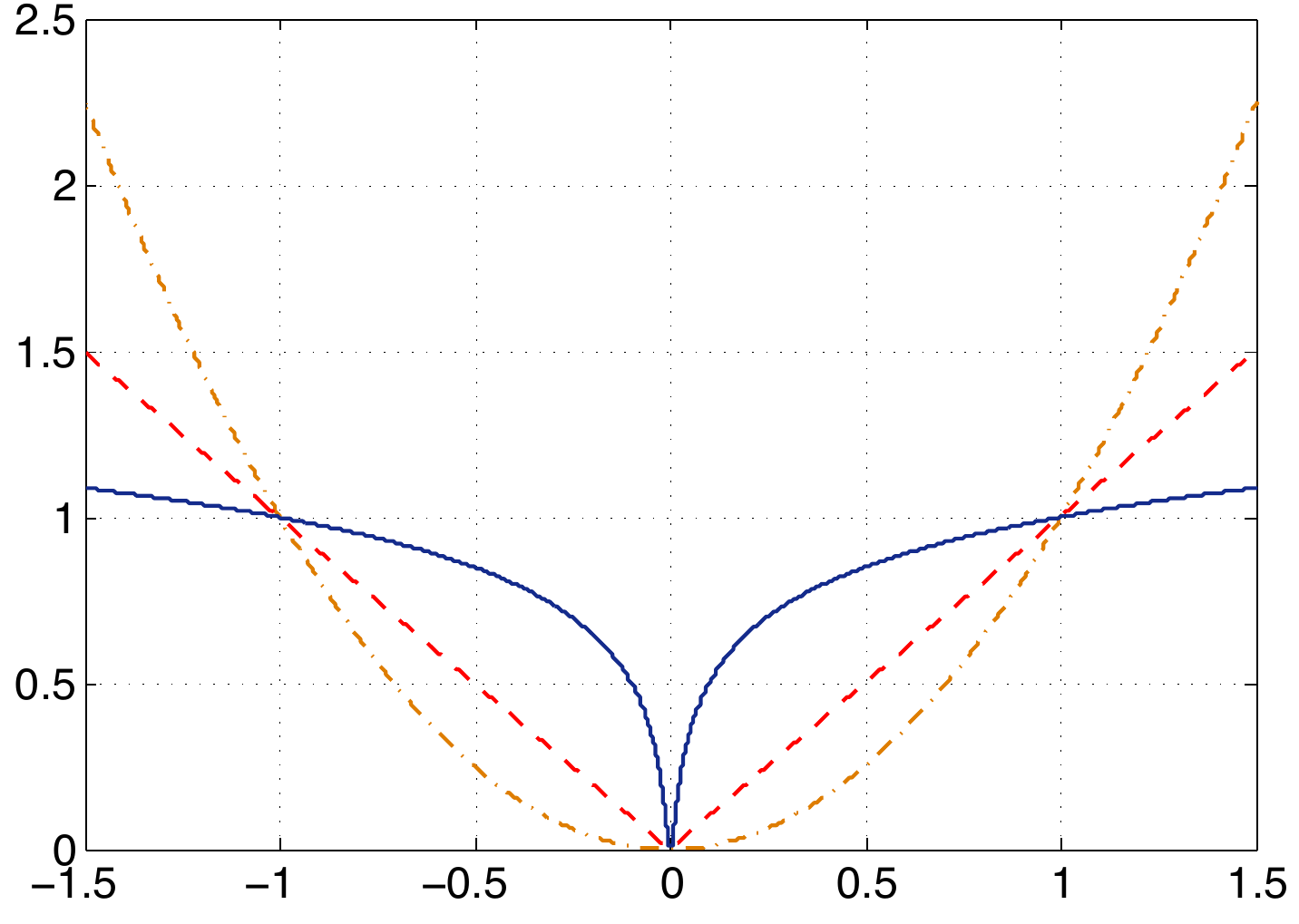}}
\label{fig:potential}
}
\subfigure[ ]{
\includegraphics[scale=0.51]{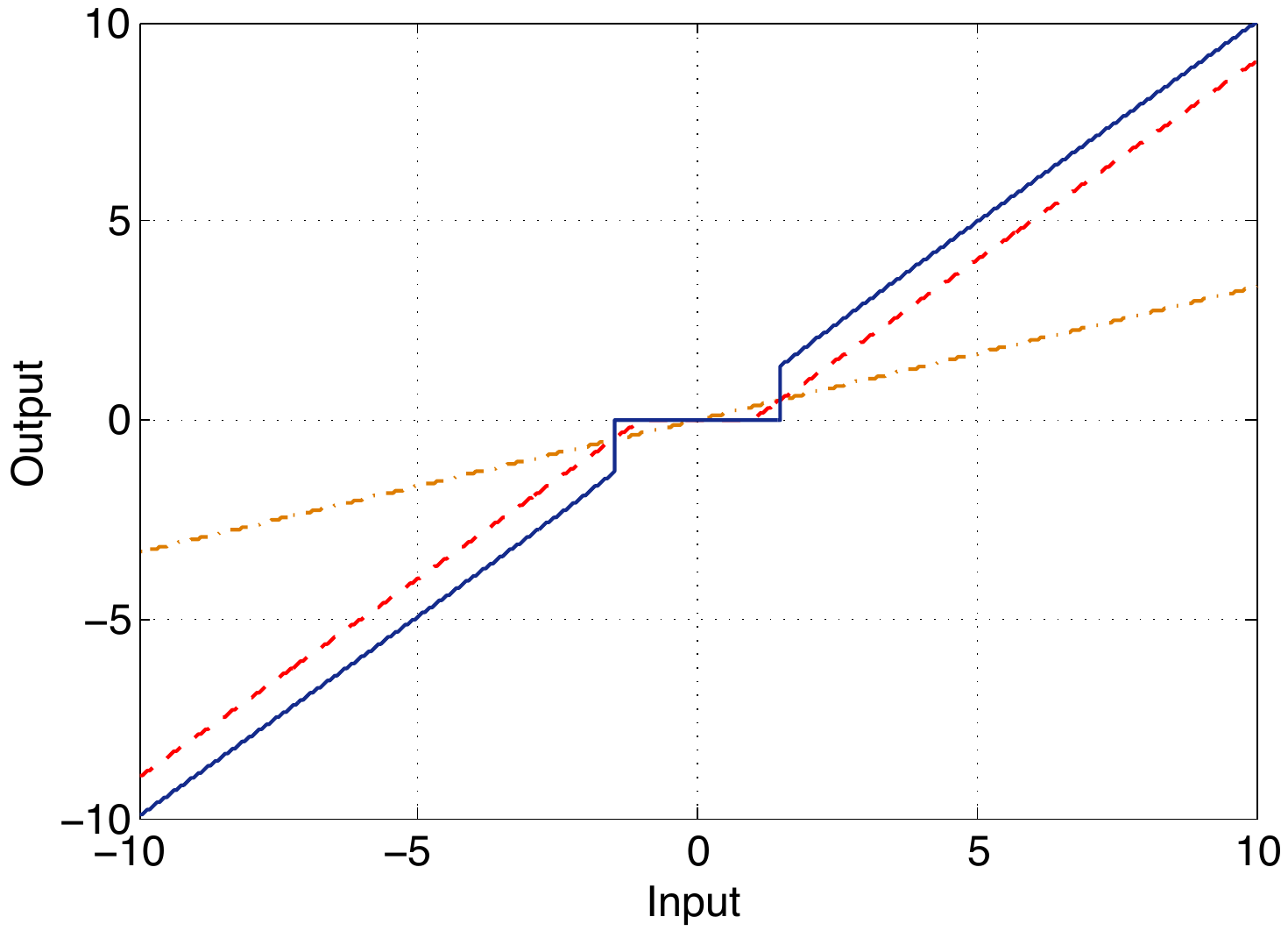}
\label{fig:prox}
}
\caption[Optional caption for list of figures]{Potential functions (a) and the corresponding proximity operators (b) of different estimators: Gaussian estimator (dash-dotted), Laplacian estimator (dashed), and Student's estimator with $\epsilon=10^{-2}$ (solid). The multiplication factors are set such that  $\Phi_U(1)=1$ for all potential functions. }
\label{fig:potentialFunctions}
\end{figure*}

\subsection{Potential Functions}
Recall that, in the current Bayesian formulation, the potential function $\Phi_U(x)=-{\rm log}\,p_U(x)$ is specified by the L\'evy exponent  $f_{\beta_\Lop^\vee}$ which is itself in direct relation with the continuous-domain innovation $w$ via~\eqref{eq:levyExponentSpline}. For illustration purposes, we consider  three members of the i.d. family: Gaussian, Laplace, and Student's (or, equivalently, Cauchy) distributions.  We provide the potential functions for these priors in Table~\ref{potentials}. The exact values of the constants $C_1$, $C_2$, and $C_3$ and the positive scaling factors $M_1$, $M_2$, and $M_3$  have been omitted since they are irrelevant to the optimization problem. On one hand, we already know that the Gaussian prior does not correspond to a sparse reconstruction. On the other hand, the Student's prior has a slower tail decay and promotes sparser solutions than the Laplace prior. Also, to provide a geometrical intuition of how the Student's prior increases the sparsity of the solution, we plot the potential functions for Gaussian, Laplacian, and Student's (with $\epsilon=10^{-2}$) estimators in Figure~\ref{fig:potentialFunctions}. By looking at Figure~\ref{fig:potentialFunctions}, we see that the Student's estimator penalizes small values more than the Laplacian or Gaussian counterparts do. Conversely, it penalizes the large values less. 

Let us point out some connections between the general estimator (\ref{eq:MAP}) and the standard variational methods.  The first  quadratic potential function (Gaussian estimator) yields the classical Tikhonov-type regularizer and produces a stabilized linear solution, as explained in Section~\ref{sec:Introduction}. The second potential function (Laplace estimator) provides the $\ell_1$-type regularizer. Moreover, the well-known TV regularizer~\cite{Rudin.etal1992} is obtained if the operator $\Lop$ is a first-order derivative operator. Interestingly, the third log-based potential (Student's estimator) is linked to the limit case of the $\ell_p$ relaxation scheme as $p \rightarrow 0$~\cite{Wipf.Nagarajan2010}. To see the relation, we note that minimizing $\lim_{p\rightarrow 0}\sum_i |x_i|^p$ is equivalent to minimizing $\lim_{p \rightarrow 0} \sum_i \frac{|x_i|^p - 1}{p}$. As pointed out in~\cite{Wipf.Nagarajan2010},  it holds that
\begin{align}\label{eq:logBound}
\lim_{p \rightarrow 0} \sum_i \frac{|x_i|^p - 1}{p} &= \sum_i{\rm log}|x_i| =  \sum_i \half {\rm log}|x_i|^2 \nonumber\\
& \leq \frac{1}{2} \sum_i {\rm log}(x_i^2 + \kappa)
\end{align}
for any $\kappa \geq 0$. The key observation is that the upper-bounding log-based potential function in~\eqref{eq:logBound} is interpretable as a Student's prior. This kind of regularization has been considered by different authors (see~\cite{Chartrand.Yin2008, Zhang.Kingsbury2010,Kamilov.etal2012} and also~\cite{Candes.etal2008} where the authors consider a similar log-based potential). 

\section{Reconstruction Algorithm}\label{sec:algorithm}
We have now the necessary elements to derive the general MAP solution of our reconstruction problem. By using the discrete innovation vector $\ubf$ as an auxiliary variable, we recast the MAP estimation as the constrained optimization problem 
\begin{eqnarray}\label{constrainedMAP}
    \sbf_{{\rm MAP}} = &\arg\underset{\sbf \in \mathbb{R}^K}\min & \left( \frac{1}{2}\|\h\sbf-\y\|_2^2 +\sigma^2
    \sum_{\bk \in \Omega } \Phi_U\left(  u[\bk]  \right) \right)  \nonumber\\ 
    &\centering \text{subject to} & \quad \ubf = \Lbf\sbf\text{.}
\end{eqnarray}
This representation of the solution naturally suggests using the type of splitting-based techniques that have been employed by various authors for solving similar optimization problems~\cite{Wang.etal2008, Ramani.Fessler2010, Afonso.etal2011}. Rather than dealing with a constrained optimization problem directly, we prefer to formulate an equivalent unconstrained problem. To that purpose, we rely on the augmented-Lagrangian method~\cite{Nocedal.Wright2006} and introduce the corresponding \textit{augmented Lagrangian} (AL) functional of~\eqref{constrainedMAP} given by
\begin{eqnarray*}
    \mathcal{L}_\mathcal{A}(\sbf,\ubf,\aalpha) & = &\frac{1}{2}\|\h\sbf-\y\|_2^2 +\sigma^2 \sum_{\bk \in \Omega } \Phi_U \left( u[\bk] \right) \\
    &\mbox{}& +\aalpha^\mathrm{T}(\Lbf\sbf-\ubf) + \frac{\mu}{2} \|\Lbf\sbf-\ubf\|_2^2\text{,}
\end{eqnarray*}
where $\aalpha \in \mathbb{R}^K$ denotes the \textit{Lagrange-multiplier} vector and $\mu \in \mathbb{R}$ is called the \textit{penalty parameter}. The resulting optimization problem takes of the form
\begin{equation}\label{unconstrainedMAP}
\underset{\left( \sbf\in\mathbb{R}^K,~\ubf \in\mathbb{R}^K \right) }{\operatorname{min}}~\mathcal{L}_{\mathcal{A}}(\sbf,\ubf,\aalpha)\text{.}
\end{equation}
To obtain the solution, we apply the \textit{alternating-direction method of multipliers} (ADMM) ~\cite{Boyd.etal2011} that replaces  the joint minimization of the AL functional over $(\sbf, \ubf)$ by the partial minimization of $\mathcal{L}_\mathcal{A}(\sbf,\ubf,\aalpha)$ with respect to each independent variable in turn, while keeping the other variable fixed. These independent minimizations are followed by an update of the Lagrange multiplier. In summary, the algorithm results in the following scheme at iteration $t$:
\begin{subequations}
	\begin{align}
   	 	\qquad \ubf^{t+1} & \leftarrow \arg\underset{\ubf}\min \quad
    		\mathcal{L}_{\mathcal{A}}(\sbf^t,\ubf,\aalpha^t)
		\label{uUpdate}	\\
    		\qquad \sbf^{t+1} & \leftarrow \arg\underset{\sbf}\min \quad
    		\mathcal{L}_{\mathcal{A}}(\sbf,\ubf^{t+1},\aalpha^t)
		\label{xUpdate}	\\
    		\aalpha^{t+1}& = \aalpha^t + \mu(\Lbf\sbf^{t+1}-\ubf^{t+1})\text{.}
		\label{alphaUpdate}	
	\end{align}
\end{subequations}
From the Lagrangian duality point of view, \eqref{alphaUpdate} can be interpreted as the maximization of the dual functional so that, as the above scheme proceeds, feasibility is imposed~\cite{Boyd.etal2011}. 

\begin{figure*}[th]
\centering
\subfigure[]{
\includegraphics[scale=0.25]{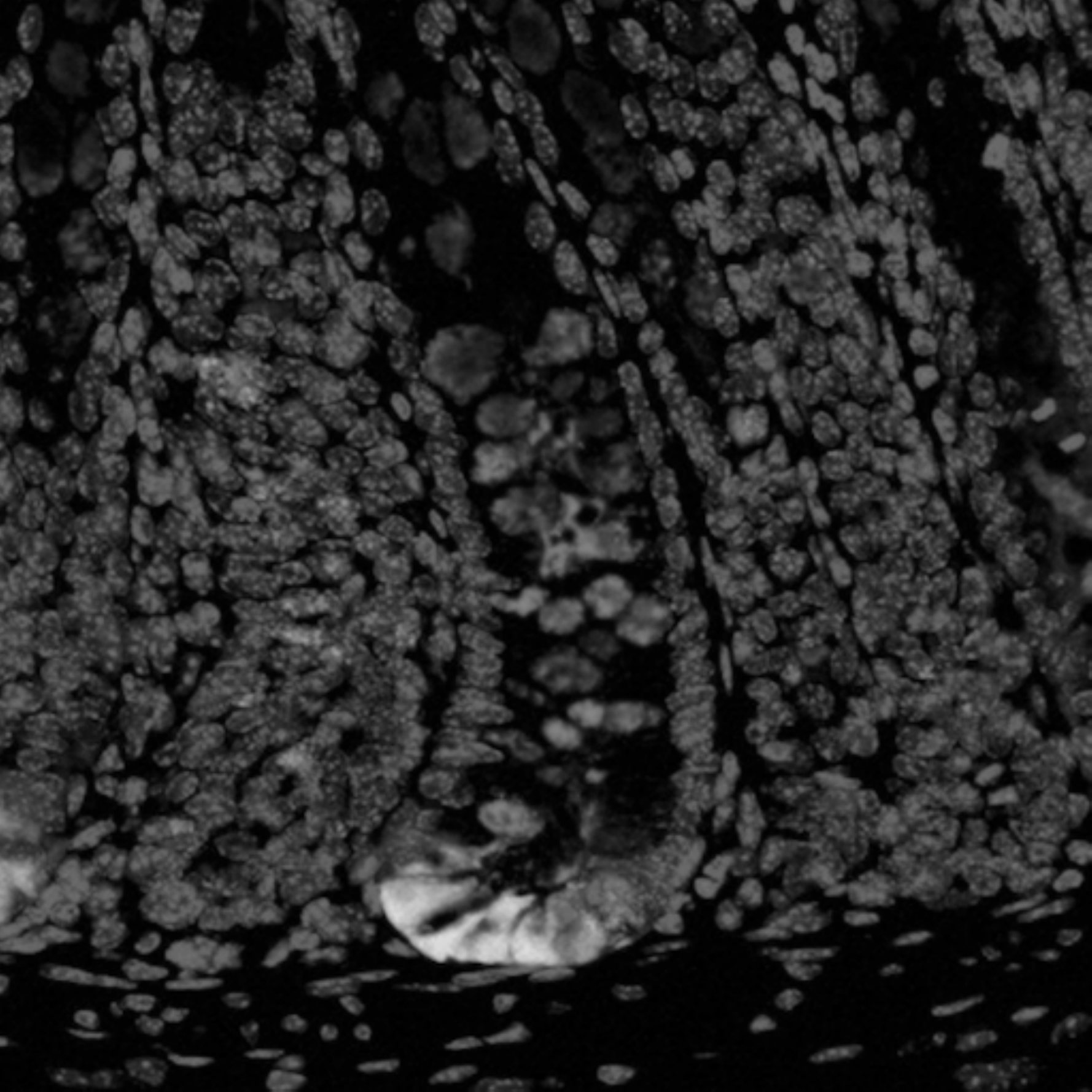}
\label{fig:stemcells}
}
\subfigure[ ]{
\includegraphics[scale=0.25]{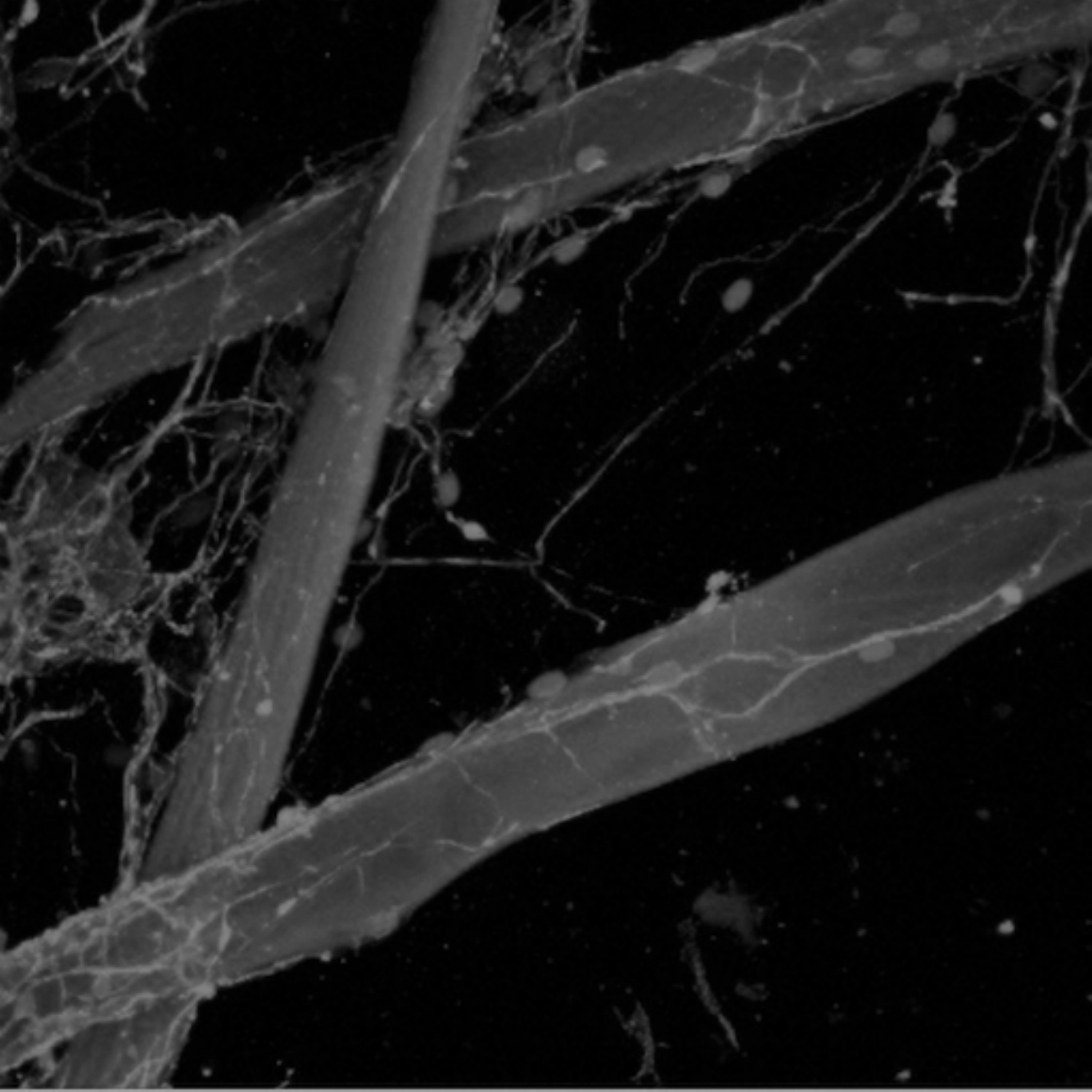}
\label{fig:schwanncells}
}
\subfigure[ ]{
\includegraphics[scale=0.25]{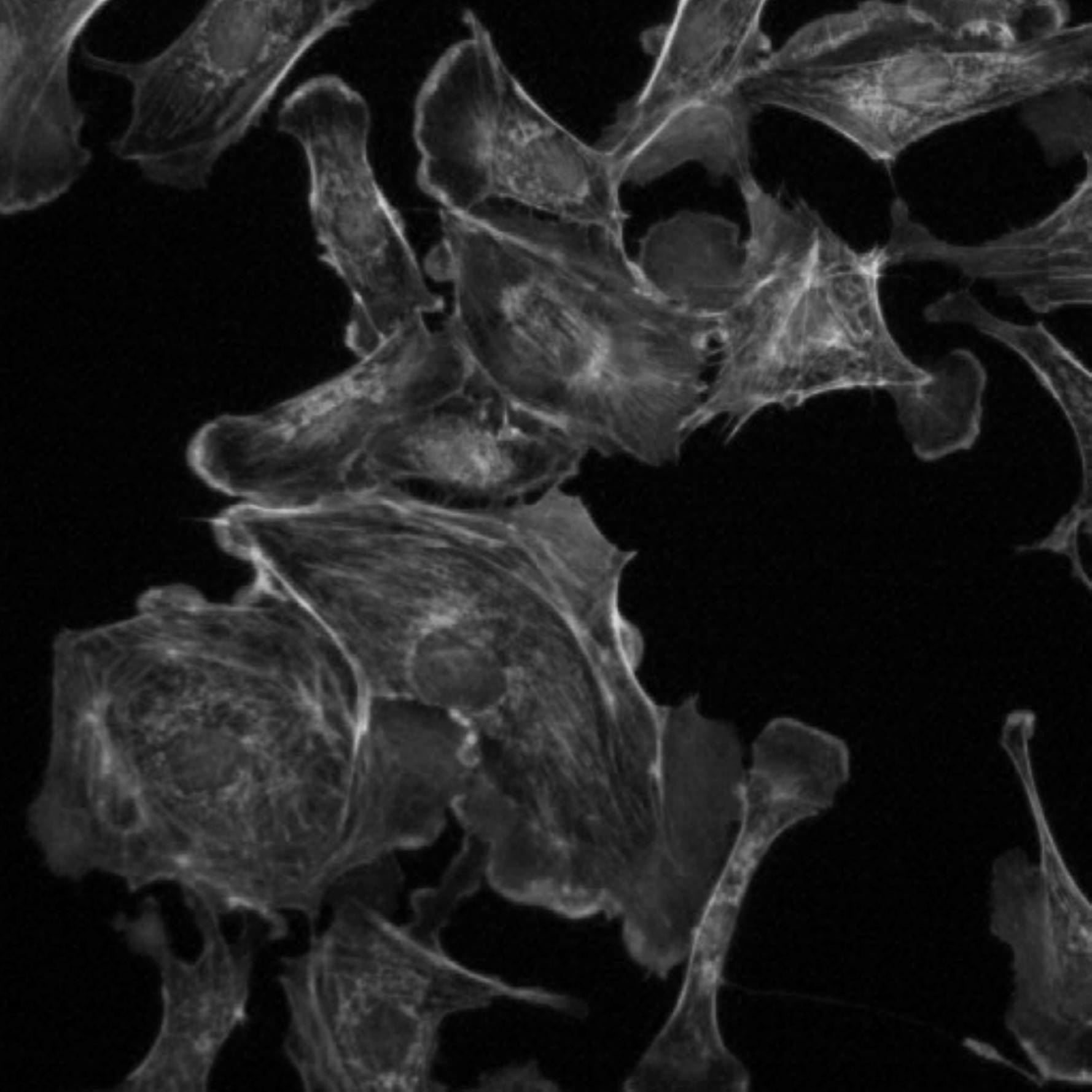}
\label{fig:arterycells}
}
\caption[Optional caption for list of figures]{Images used in deconvolution experiments: (a) stem cells surrounded by goblet cells; (b) nerve cells growing around fibers; (c) artery cells.}
\label{fig:datasetDECONV}
\end{figure*}

\begin{table*}
\caption{Deconvolution performance of MAP estimators based on different prior distributions.}
\begin{center}
     \begin{tabular*}{0.53\textwidth}[th]{  r | c c  c  c}
    \hline\hline
                             & BSNR (dB)  & Gaussian  & Laplace      & Student's     \\  \hline
    Stem cells      & 20                 & \bf{14.43}  &    13.76       &  11.86             \\ 

    Stem cells      & 30                & \bf{15.92}  &    15.77       &   13.15            \\ 

    Stem cells      & 40               & \bf{18.11}  &   \bf{18.11} &   13.83          \\ 
    \hline
    Nerve cells & 20            & 13.86     &  \bf{15.31}   &   14.01                \\ 

    Nerve cells & 30           & 15.89     &  \bf{18.18}   &   15.81             \\ 

    Nerve cells & 40           & 18.58    &  \bf{20.57}   &   16.92               \\     
    \hline
    Artery cells & 20                 & 14.86     &  \bf{15.23}   &   13.48                \\ 

    Artery cells & 30                 & 16.59    &  \bf{17.21}   &   14.92                \\ 

    Artery cells & 40                & 18.68    &  \bf{19.61}   &   15.94             \\   
    \hline\hline
  \end{tabular*}
  \end{center}
  \label{tab:res_deconv}  
\end{table*}

Now, we focus on the sub--problem~\eqref{uUpdate}. In effect, we see that the minimization is separable, which implies that~\eqref{uUpdate} reduces to performing $K$ scalar minimizations of the form
\begin{equation}\label{eq:scalarMinimization}
    \underset{u[\bk] \in \R} \min ~\left(\sigma^2 \Phi_U(u[\bk] ) + \frac{\mu}{2} \left( u[\bk] - z[\bk] \right)^2 \right)~\text{,}~\forall \bk \in \Omega\text{, }
\end{equation}
where $\zbf= \Lbf\sbf + \aalpha / \mu $.  One sees that~\eqref{eq:scalarMinimization} is nothing but the proximity operator of $\Phi_U(\cdot)$ that is defined below.
\begin{defn}
The proximity operator associated to the function $\lambda \Phi_U(\cdot)$ with $\lambda\in\mathbb{R}_+$ is defined as 
    \begin{equation}
        {\rm prox}_{\Phi_U}(y;\lambda) =
        \arg\underset{x\in \R}\min~\frac{1}{2}(y-x)^2+\lambda\Phi_U(x)\text{.}
    \end{equation}
\end{defn}
Consequently,~\eqref{uUpdate} is obtained by applying ${\rm prox}_{\Phi_U}(z;~\frac{\sigma^2}{\mu})$ in a \textit{component-wise} fashion to
$\zbf= \Lbf\sbf^t+\aalpha^t / \mu$. The closed-form solutions for the proximity operator are well-known for the Gaussian and Laplace priors. They are  given by
\begin{subequations}
	\begin{align}
	&{\rm prox}_{(\cdot)^2}\left(z; \lambda \right) =  z (1+2\lambda)^{-1}\text{,}\\
	&{\rm prox}_{|\cdot|}\left(z; \lambda\right) =  \max(|z| - \lambda, 0){\rm sgn}(z){,}
	\end{align}
\end{subequations}
respectively. The proximity operator has no closed-form solution for the Student's potential.  For this case, we propose to precompute and store it in a lookup table (LUT) (cf. Figure~\ref{fig:prox}). This idea suggests a very fast implementation of the proximal step which is applicable to the entire class of i.d. potentials.

We now consider the second minimization problem~\eqref{xUpdate}, which amounts to the minimization of a quadratic problem for which the solution is given by 
\begin{equation}\label{quadraticProblem}
    \sbf^{t+1}=(\h^{\mathrm{T}}\h+\mu\Lbf^\mathrm{T}\Lbf)^{-1}\left(\h^\mathrm{T}\y+ \mu{\bf L}^\mathrm{T} \left(  {\bf u}^{t+1} - \frac{\boldsymbol \alpha^t }{\mu}\right)\right)\text{.}
\end{equation}
Interestingly, this part of the reconstruction algorithm is equivalent to the Gaussian solution given in~\eqref{eq:solutionTikhonov} and~\eqref{eq:gaussianMAP}.
In general, this problem can be solved iteratively using a linear solver such as the conjugate-gradient (CG) method. Also in some cases, the direct inversion is possible. For instance, when $\h^{\mathrm{T}}\h$ has a convolution structure, as in some of our series of experiments, the direct solution can be obtained by using the FFT~\cite{Hansen.etal2006}.

We conclude this section with some remarks regarding the optimization algorithm. We first note that the method remains applicable when $\Phi_U(x)$ is nonconvex, with the following caveat: When the ADMM converges and $\Phi_U$ is nonconvex, it converges to a local minimum,  including the case where the sub-minimization problems are solved exactly~\cite{Boyd.etal2011}. As the potential functions considered in the present context are closed and proper, we stress the fact that if $\Phi_U:\mathbb{R}\rightarrow \mathbb{R}_+$ is convex and the unaugmented Lagrangian functional has a saddle point, then the constraint in~\eqref{constrainedMAP} is satisfied and the objective functional reaches the optimal value as $t\to\infty$~\cite{Boyd.etal2011}.  Meanwhile, in the case of a nonconvex problems, the algorithm can potentially get trapped in a local minimum in the very early stages of the optimization. It is therefore recommended to apply a deterministic continuation method or to consider a \textit{warm start} that can be obtained by solving the problem first with Gaussian or Laplace priors. We have opted for the latter solution as an effective remedy for convergence issues.


\section{Numerical Results}\label{sec:numericalResults}

In the sequel, we illustrate our method with some concrete examples. We concentrate on three different imaging modalities and consider the problems of deconvolution, MR image reconstruction from partial Fourier coefficients, and image reconstruction from X-ray tomograms. For each of these problems, we present how the discretization paradigm is applied. In addition, our aim is to show that the adequacy of a given potential function is dependent upon the type of image being considered. Thus, for a fixed imaging modality, we perform model-based image reconstruction, where we highlight images that suit well to a particular estimator. For all the experiments, we choose $\Lbf$ to be the discrete-gradient operator. As a result, we update the proximal operators in Section~\ref{sec:algorithm} to their vectorial counterparts. The regularization parameters are optimized via an oracle to obtain the highest-possible SNR. The reconstruction is initialized in a systematic fashion: The solution of the Gaussian estimator is used as initial solution for the Laplace estimator and the solution of the Laplace estimator is used as initial solution for Student's estimator. The $\epsilon$ parameter for Student's estimator is set to $10^{-2}$.



\begin{figure*}[th]
\centering
\subfigure[]{
\includegraphics[scale=0.5]{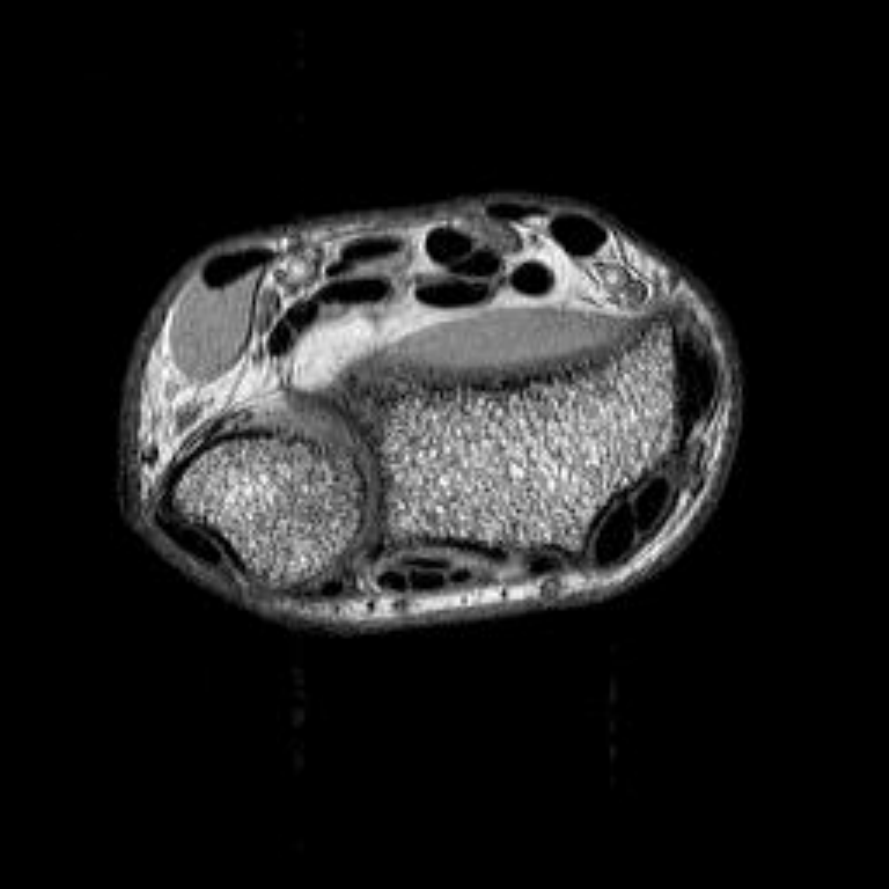}
\label{fig:leg_mri}
}
\subfigure[ ]{
\includegraphics[scale=2]{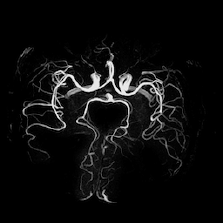}
\label{fig:angio_mri}
}
\subfigure[ ]{
\includegraphics[scale=0.5]{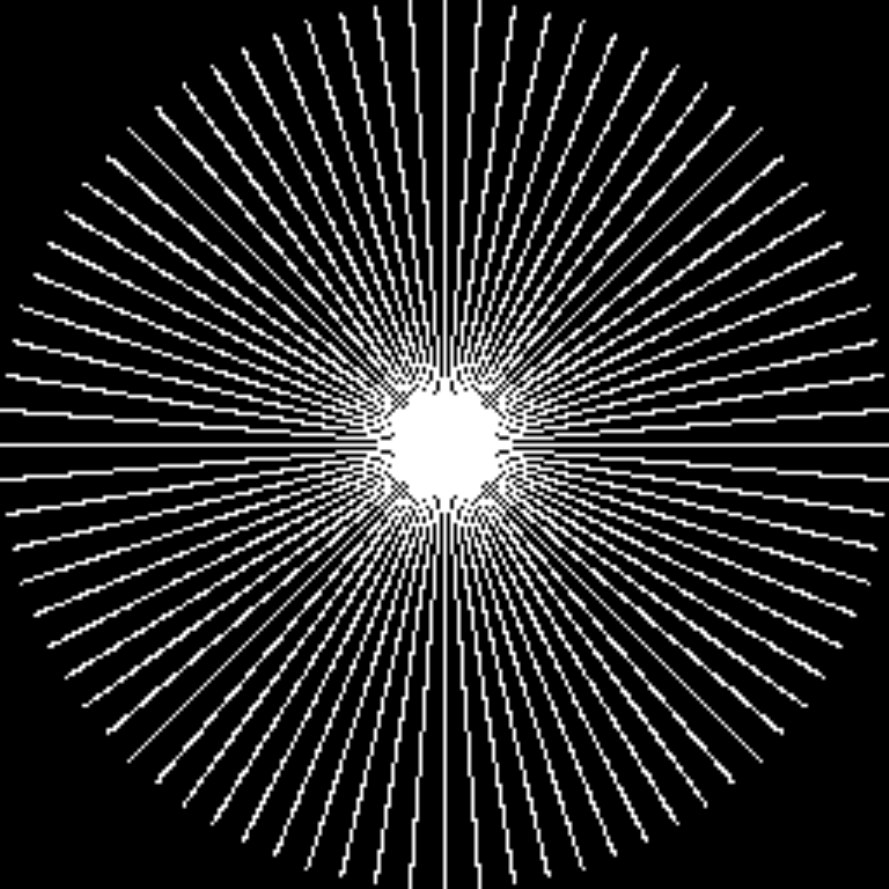}
\label{fig:mask_mri}
}
\caption[Optional caption for list of figures]{Data used in MR reconstruction experiments: (a) cross section of a wrist;  (b) angiography image; (c) k-space sampling pattern along 40 radial lines.}
\label{fig:datasetMRI}
\end{figure*}

\begin{table*}
  \caption{MR image reconstruction performance of MAP estimators based on different prior distributions.}
\begin{center}
     \begin{tabular*}{0.52\textwidth}[th]{  r | c  c  c}
    \hline\hline
                                                                    & Gaussian & Laplace & Student's          \\  \hline
    Wrist (20 radial lines)       &  8.82  &    \bf{11.8} &   5.97                \\ 

    Wrist (40 radial lines)       & 11.30   &    \bf{14.69} &  13.81                \\ 
    \hline
    Angiogram (20 radial lines) & 4.30     &  9.01      &   \bf{9.40}                  \\ 

    Angiogram (40 radial lines) & 6.31     &  14.48   &   \bf{14.97}                \\ 
    \hline\hline
  \end{tabular*}
  \end{center}
  \label{tab:res_mri}  
\end{table*}
\subsection{Image Deconvolution}
\label{seq:deconvolution}

The first problem we consider is the deconvolution of microscopy images. In deconvolution, the measurement function in~\eqref{equ:ContinuousMeasurement} corresponds to the shifted version of the point-spread function (PSF) of the microscope on the sampling grid: $\psi^{\rm D}_m(\bx) = \psi^{\rm D}(\bx-\bx_m)$ where $\psi^{\rm D}$ represents the PSF. We discretize the model by choosing $\varphi_{\rm int}(\bx)={\rm sinc}(\bx)$ with $\varphi_\bk(\bx)=\varphi_{\rm int}(\bx-\bx_\bk)$. The entries of the resulting system matrix $\h$ are given by 
\begin{equation}\label{eq:systemMatrixDeconvolution}
\left[\h\right]_{m, \bk} = \langle {\psi^{\rm D}_m(\cdot),\rm sinc}(\cdot-\bx_\bk)\rangle\text{,}
\end{equation}
In effect,~\eqref{eq:systemMatrixDeconvolution} corresponds to the samples of the band-limited version of the PSF. 

We perform controlled experiments, where the blurring of the microscope is simulated by a Gaussian PSF kernel of support $9\times9$ and standard deviation $\sigma_{\rm b}=4$, on three microscopic images of size $512\times512$ that are displayed in Figure~\ref{fig:datasetDECONV}. In Figure~\ref{fig:stemcells}, we show stem cells surrounded by numerous goblet cells. In Figure~\ref{fig:schwanncells}, we illustrate nerve cells growing along fibers, and we show in Figure~\ref{fig:arterycells} bovine pulmonary artery cells.  

For deconvolution, the algorithm is run for a maximum of 500 iterations, or until the relative error between the successive iterates  is less than $5\times 10^{-6}$. Since $\h$ is block-Toeplitz, it can be diagonalized by a Fourier transform under suitable boundary conditions. Therefore, we use a direct FFT-based solver for~\eqref{quadraticProblem}. The results are summarized in Table~\ref{tab:res_deconv}, where we compare the performance of three regularizers for the different blurred SNR (BSNR) levels defined as
$\text{BSNR}={\rm var}(\h\sbf)/\sigma^2$.

We conclude from the results of Table~\ref{tab:res_deconv} that the MAP estimator based on a Laplace prior yields the best performance for images having sharp edges with a moderate amount of texture, such as those in  Figures~\ref{fig:schwanncells}-\ref{fig:arterycells}. This confirms the observation that, by promoting solutions with sparse gradient, it is possible to improve the deconvolution performance. However, enforcing sparsity too heavily, as is the case for Student's priors, results in a degradation of the deconvolution performance for the biological images considered. Finally, for a heavily textured image like the one found in Figure~\ref{fig:stemcells}, image deconvolution based on Gaussian priors yields the best performance. We note that the derived algorithms are compatible with the methods commonly used in the field (e.g.,\ Tikhonov regularization~\cite{Preza.etal1993} and TV regularization~\cite{Dey.etal2006}).

\begin{figure*}[th]
\centering
\subfigure[]{
\includegraphics[scale=0.44]{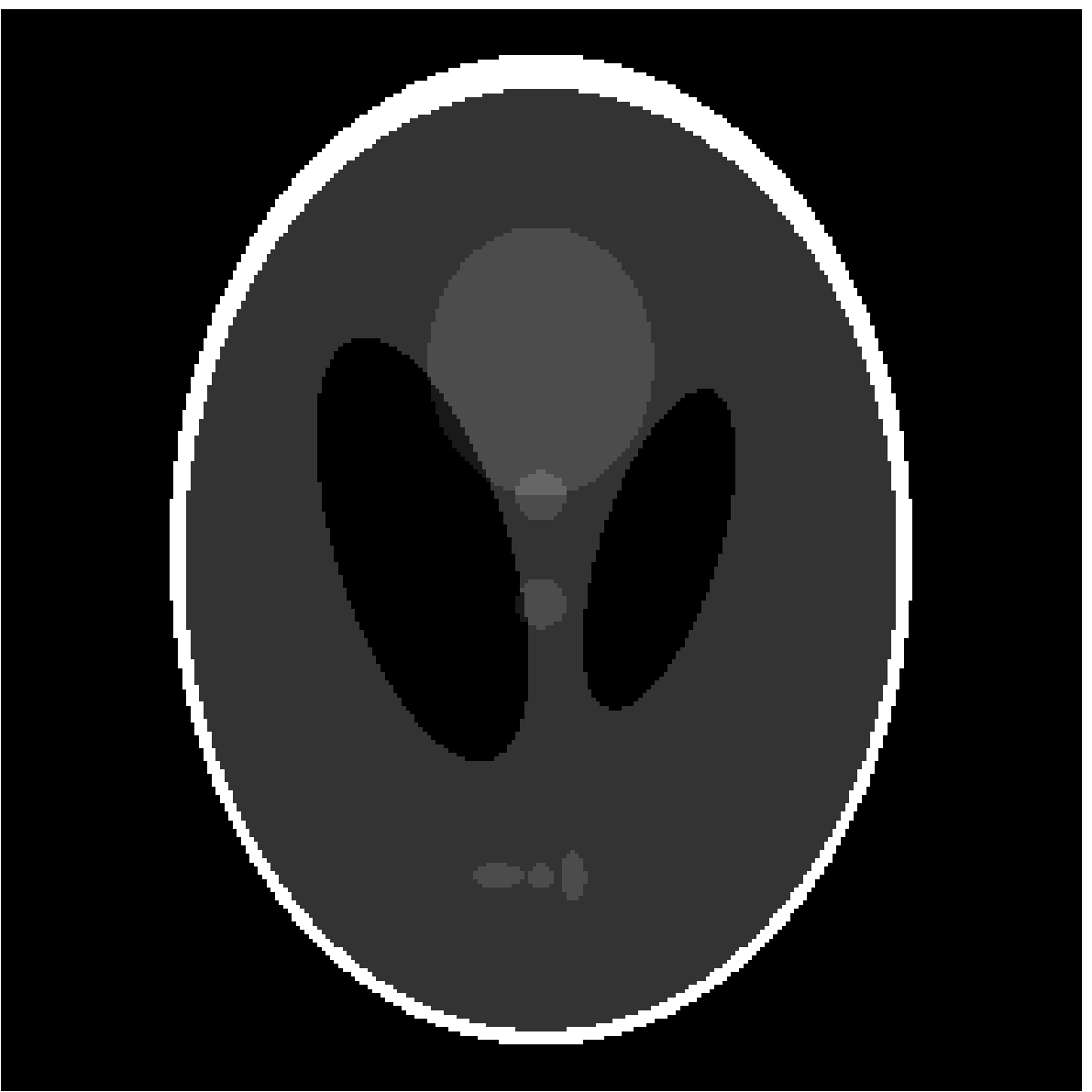}
\label{fig:subfig1_xray}
}
\subfigure[ ]{
\includegraphics[scale=0.20]{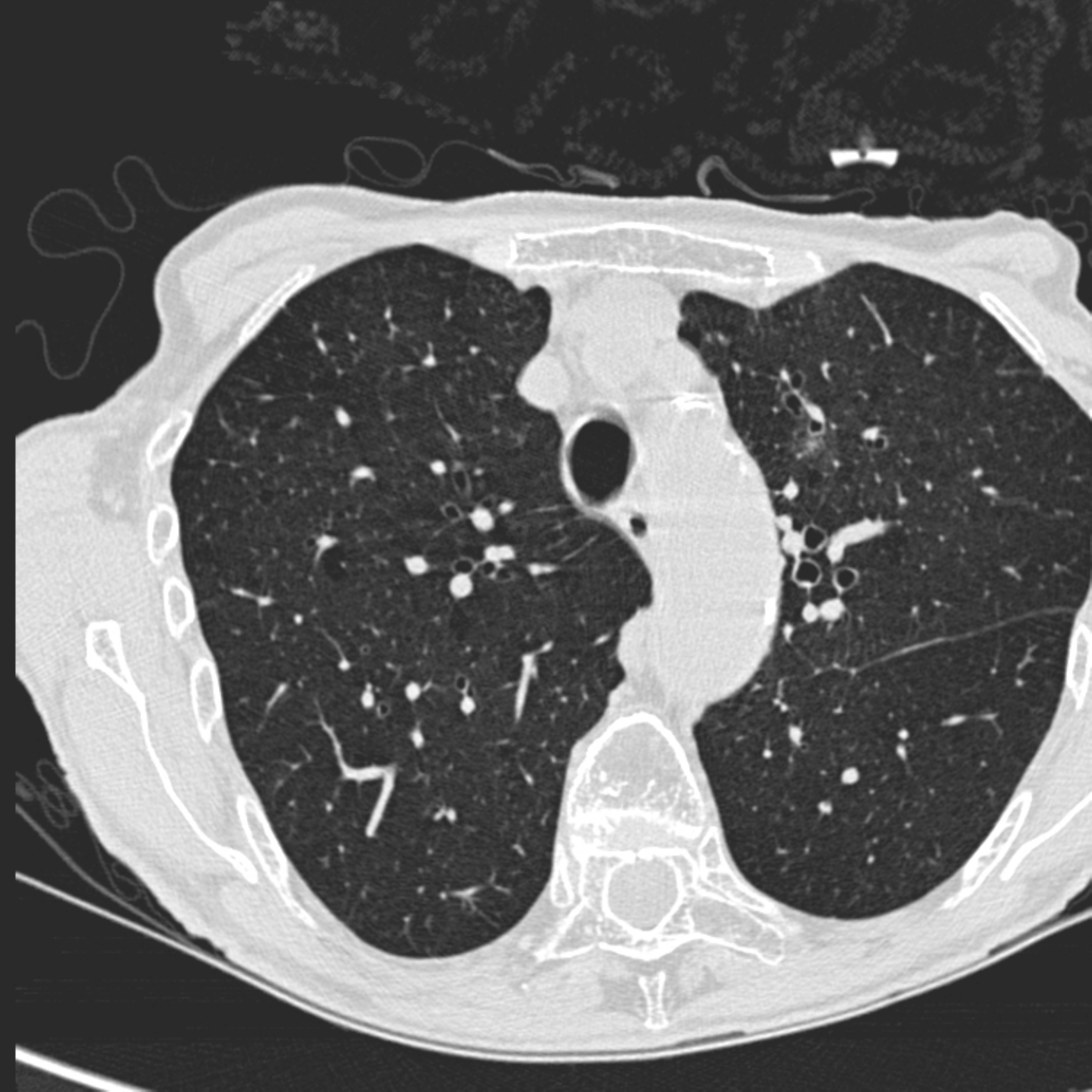}
\label{fig:subfig2-xray}
}
\caption[Optional caption for list of figures]{Images used in X-ray tomographic reconstruction experiments: (a) the Shepp-Logan (SL) phantom; (b) cross section of the lung.    }
\label{fig:datasetXRAY}
\end{figure*}

\begin{table*}
  \caption{Reconstruction results of X-ray computed tomography using different estimators.}
\begin{center}
     \begin{tabular*}{0.53\textwidth}[th]{  r | c c  c}
    \hline\hline
                             & Gaussian & Laplace & Student's \\  \hline
          SL Phantom (120 direction)& 16.8         &  17.53  &   \bf{18.76}                \\ 

        SL Phantom (180 direction)& 18.13         &  18.75  &   \bf{20.34}                \\ 

            \hline     
                Lung (180 direction)& \bf{22.49} &    21.52      &   21.45                \\ 
 
    Lung (360 direction) & \bf{24.38} &    22.47      &   22.37                \\

    \hline\hline
  \end{tabular*}
  \end{center}
  \label{tab:res_xray}  
\end{table*}

\subsection{MRI Reconstruction}
We consider the problem of reconstructing MR images from undersampled $\bk$-space trajectories. The measurement function represents a complex exponential at some fixed frequencies and is defined as $\psi_m^{\rm M}(\bx)={\rm e}^{2\pi {\rm j}\langle\bk_m,\bx \rangle}$ where $\bk_m$ represents the sample point in $\bk$-space. As in Section~\ref{seq:deconvolution}, we choose $\varphi_{\rm int}(\bx)={\rm sinc}(\bx)$ for the discretization of the forward model, which results in a system matrix with the entries 
\begin{align}
\left[\h\right]_{m, \bk} &= \langle \psi_m^{\rm M}(\bx),{\rm sinc}(\cdot-\bx_\bk)\rangle\nonumber \\
                                     &= {\rm e}^{-{\rm j}2\pi \langle\bk_m,\bx_\bk \rangle} \,\,\,\, \text{if} \,\,\,\, |\bk_m|_{\infty} \leq \half\text{.}
\end{align}
The effect of choosing a ${\rm sinc}$ function is that the system matrix reduces to the discrete version of complex Fourier exponentials.

We study the reconstruction of the two MR images of size $256\times256$ illustrated Figure~\ref{fig:datasetMRI}---a cross-section of a wrist is displayed in the first image, followed by an MR angiography image---and consider a radial sampling pattern in $\bk$-space (cf.\ Figure~\ref{fig:mask_mri}).  

The reconstruction algorithm is run with the stopping criteria set as in Section~\ref{seq:deconvolution} and an FFT-based solver is used for~\eqref{quadraticProblem}. We show in Table~\ref{tab:res_mri} the reconstruction performance of our estimators for different number of radial lines. 

On one hand, the estimator based on Laplace priors yield the best solution in the case of the wrist image, which has sharp edges and some amount of texture. Meanwhile, the reconstructions using Student's priors are suboptimal because they are too sparse. This is similar to what was observed with microscopic images. On the other hand, Student's priors are quite suitable for reconstructing the angiogram, which is mostly composed of piecewise-smooth components. We also observe that the performance of Gaussian estimators is not competitive for the images considered. Our reconstruction algorithms are tightly linked with the deterministic approaches used for MRI reconstruction including TV~\cite{Block.etal2007} and log-based reconstructions~\cite{Trzasko.Manduca2009}.

\subsection{X-Ray Tomographic Reconstruction}

X-ray computed tomography (CT) aims at reconstructing an object from its projections taken along different directions.  The mathematical model of a conventional CT is based on the Radon transform
\begin{align*}
g_{\theta_m}(t_m)&={\cal{R}}_{\theta_m}\{s(\bx)\}(t_m)\\
               &=\int_{\R^2} s(\bx)\delta(t_m-\langle \bx,\boldsymbol{\theta_m}\rangle){\rm d}\bx\,,
\end{align*} 
where $s(\bx)$ is the absorption coefficient distribution of the underlying object, $t_m$ is the sampling point and $\boldsymbol{\theta}_m=({\cos}(\theta_m), {\sin}(\theta_m))$ is the angular parameter.  Therefore, the measurement function $\psi_m^{\rm X}(\bx)=\delta(t_m-\langle \bx,\boldsymbol{\theta}_m\rangle)$ denotes an idealized line in $\R^2$  perpendicular to $\boldsymbol{\theta}_m$. In our formulation, we represent the absorption distribution in the space spanned by the tensor product of two B-splines
\begin{equation*}
s({\bf{x}})=\sum_{\bf{k}} s[{\bf{k}}] \varphi_{\rm int}(\bx-\bk)\,,
\end{equation*}   
where $\varphi_{\rm int}(\bx)=\text{tri}(x_1)\text{tri}(x_2)$, with $\text{tri}(x)=\left(1-|x|\right)$ denoting the linear B-spline function. The entries of the system matrix are then determined explicitly using the B-spline calculus described in~\cite{Entezari.etal2012}, which leads to
\begin{align*}
\left[\h\right]_{m, \bk} &= \left\langle \delta(t_m-\langle\bx,\boldsymbol{\theta}_m\rangle), \varphi_{\rm int}(\bx-\bk)\right\rangle \\
                                &= \frac{\bigtriangleup_{\abs{\cos\theta_m}}^{2}\bigtriangleup_{\abs{\sin\theta_m}}^{2}}{3{!}}(t_m-\langle\bk,\boldsymbol{\theta}_m\rangle)_{+}^{3},
\end{align*}
where $\bigtriangleup_{h}f(t)=\frac{f(t)-f(t-h)}{h}$ is the finite-difference operator, $\bigtriangleup_{h}^nf(t)$ is its $n$-fold iteration, and $t_+=\text{max}(0,t)$.
This approach provides an accurate modeling, as demonstrated in~\cite{Entezari.etal2012} where further details regarding the system matrix and its implementation are provided. 

We consider the two images shown in Figure~\ref{fig:datasetXRAY}. The Shepp-Logan (SL) phantom has size $256\times 256$, while the cross section of the lung has size $750\times 750$. In the simulations of the forward model, the Radon transform is computed along 180 and 360 directions for the lung image and along 120 and 180 directions for the SL phantom. The measurements are degraded with the Gaussian noise such that the signal-to-noise ratio is 20 dB.  

For the reconstruction, we solve the quadratic minimization problem~\eqref{quadraticProblem} iteratively by using 50 CG iterations. The reconstruction results are reported in Table~\ref{tab:res_xray}. 

The SL phantom is a piecewise-smooth image with sparse gradient. We observe that the imposition of more sparsity brought by Student's priors significantly improves the reconstruction quality for this particular image. On the other hand, we find that the Gaussian priors for the lung image outperform the other priors. Like the deconvolution and MRI problems, our algorithms are in line with Tikhonov-type~\cite{Wang.etal2006} and TV~\cite{Qun.Jacques2005} reconstructions used for X-ray CT.

\subsection{Discussion}
As our experiments on different types of imaging modalities have revealed, sparstity-promoting reconstructions are powerful methods for solving biomedical image reconstruction problems. However, encouraging sparser solutions does not always yield the best reconstruction performance and non-sparse solutions provided by Gaussian priors still yields better reconstructions for certain images. The efficiency of a potential function is primarily dependent upon the type of image being considered. In our model, this is related to the L\'evy exponent of the underlying continuous-domain innovation process $w$ which is in direct relationship with the signal prior. 

\section{Conclusion}
The purpose of this paper has been to develop a practical scheme for linear inverse problems by combining a proper discretization method and the theory of continuous-domain sparse stochastic processes. On the theoretical side, an important implication of our approach is that the potential functions cannot be selected arbitrarily as they are necessarily linked to infinitely divisible distributions. The latter puts restrictions but also constitutes the largest family of distributions that is closed under linear combinations of random variables. On the practical side, we have shown that the MAP estimators based on these prior distributions cover the current state-of-the-art methods in the field including $\ell_1$-type regularizers. The class of said estimators is sufficiently large to reconstruct different types of images. 


Another interesting observation is that  we face an optimization problem for MAP estimation  that is generally nonconvex, with the exception of the Gaussian and the Laplacian priors. We have proposed a computational solution, based on alternating-direction method of multipliers, that applies to arbitrary potential functions by suitable adaptation of  the proximity operator. 

In particular, we have applied our framework to deconvolution, MRI, and X-ray tomographic reconstruction problems and have compared the reconstruction performance of different estimators corresponding to models of increasing sparsity. 

In basic terms, our model is composed of two fundamental concepts: the whitening operator $\Lop$, which is in connection with the regularization operator, and the L\'evy exponent $f$, which is related to the prior distribution. A further advantage of continuous-domain stochastic modeling is that it enables us to investigate the statistical characterization of the signal in any transform domain. This observation designates key research directions: (1) the identification of the optimal whitening operators and (2) the proper fitting of the L\'evy exponent of the continuous-domain innovation process $w$ to the class of images of interest.

\bibliographystyle{IEEEtran}
\bibliography{references}

\begin{thebibliography}{10}
\providecommand{\url}[1]{#1}
\csname url@rmstyle\endcsname
\providecommand{\newblock}{\relax}
\providecommand{\bibinfo}[2]{#2}
\providecommand\BIBentrySTDinterwordspacing{\spaceskip=0pt\relax}
\providecommand\BIBentryALTinterwordstretchfactor{4}
\providecommand\BIBentryALTinterwordspacing{\spaceskip=\fontdimen2\font plus
\BIBentryALTinterwordstretchfactor\fontdimen3\font minus
  \fontdimen4\font\relax}
\providecommand\BIBforeignlanguage[2]{{%
\expandafter\ifx\csname l@#1\endcsname\relax
\typeout{** WARNING: IEEEtran.bst: No hyphenation pattern has been}%
\typeout{** loaded for the language `#1'. Using the pattern for}%
\typeout{** the default language instead.}%
\else
\language=\csname l@#1\endcsname
\fi
#2}}

\bibitem{Vonesch.etal2006}
C.~Vonesch, F.~Aguet, J.-L. Vonesch, and M.~Unser, ``The colored revolution of
  bioimaging,'' \emph{IEEE Signal Processing Magazine}, vol.~23, no.~3, pp.
  20--31, May 2006.

\bibitem{Bertero.Boccacci1998}
M.~Bertero and P.~Boccacci, \emph{Introduction to {I}nverse {P}roblems in
  {I}maging}.\hskip 1em plus 0.5em minus 0.4em\relax Taylor \& Francis, 1998.

\bibitem{Ribes.Schmitt2008}
A.~Ribes and F.~Schmitt, ``Linear inverse problems in imaging,'' \emph{IEEE
  Signal Processing Magazine}, vol.~25, no.~4, pp. 84 --99, July 2008.

\bibitem{Kay.1993}
S.~M. Kay, \emph{Fundamentals of {S}tatistical {S}ignal {P}rocessing:
  {E}stimation {T}heory}.\hskip 1em plus 0.5em minus 0.4em\relax Prentice-Hall,
  1993.

\bibitem{Mallat.2008}
S.~Mallat, \emph{A Wavelet Tour of Signal Processing}.\hskip 1em plus 0.5em
  minus 0.4em\relax Academic Press, 2008.

\bibitem{Lustig.etal2007}
M.~Lustig, D.~Donoho, and J.~M. Pauly, ``Sparse {MRI}: The application of
  compressed sensing for rapid {MR} imaging,'' \emph{Magnetic Resonance in
  Medicine}, vol.~58, no.~6, pp. 1182--95, December 2007.

\bibitem{Rudin.etal1992}
L.~I. Rudin, S.~Osher, and E.~Fatemi, ``Nonlinear total variation based noise
  removal algorithms,'' \emph{Physica D}, vol.~60, no. 1-4, pp. 259--268,
  November 1992.

\bibitem{Claerbout.Muir1973}
J.~F. Claerbout and F.~Muir, ``Robust modeling with erratic data,''
  \emph{Geophysics}, vol.~38, no.~5, pp. 826--844, October 1973.

\bibitem{Taylor.etal1979}
H.~L. Taylor, S.~C. Banks, and J.~F. McCoy, ``Deconvolution with the $\ell_1$
  norm,'' \emph{Geophysics}, vol.~44, no.~1, pp. 39--52, January 1979.

\bibitem{Zibulevsky.Elad2010}
M.~Zibulevsky and M.~Elad, ``{L1}-{L2} optimization in signal and image
  processing,'' \emph{IEEE Signal Processing Magazine}, vol.~27, no.~3, pp.
  76--88, May 2010.

\bibitem{Bouman.Sauer1993}
C.~Bouman and K.~Sauer, ``A generalized {G}aussian image model for
  edge-preserving {MAP} estimation,'' \emph{IEEE Transactions on Image
  Processing}, vol.~2, no.~3, pp. 296--310, July 1993.

\bibitem{Choi.Baraniuk1999}
H.~Choi and R.~Baraniuk, ``Wavelet statistical models and {B}esov spaces,'' in
  \emph{Proceedings of the SPIE Conference on Wavelet Applications in Signal
  Processing VII}, July 1999.

\bibitem{Babacan.etal2010}
S.~D. Babacan, R.~Molina, and A.~Katsaggelos, ``Bayesian compressive sensing
  using {L}aplace priors,'' \emph{IEEE Transactions on Image Processing},
  vol.~19, no.~1, pp. 53--64, January 2010.

\bibitem{Unser.etal2011a}
M.~Unser, P.~D. Tafti, and Q.~Sun, ``A unified formulation of {G}aussian vs.
  sparse stochastic processes--{P}art {I}: Continuous-domain theory,''
  \emph{arXiv:1108.6150v1}.

\bibitem{Unser.2000}
M.~Unser, ``Sampling---50 {Y}ears after {S}hannon,'' \emph{Proceedings of the
  {IEEE}}, vol.~88, no.~4, pp. 569--587, April 2000.

\bibitem{Papoulis.1991}
A.~Papoulis, \emph{Probability, {R}andom {V}ariables, and {S}tochastic
  {P}rocesses}.\hskip 1em plus 0.5em minus 0.4em\relax McGraw-Hill, 1991.

\bibitem{Sun.Unser2012}
Q.~Sun and M.~Unser, ``Left-inverses of fractional {L}aplacian and sparse
  stochastic processes,'' \emph{Advances in Computational Mathematics},
  vol.~36, no.~3, pp. 399--441, April 2012.

\bibitem{Mandelbrot.1983}
B.~B. Mandelbrot, \emph{The {F}ractal {G}eometry of {N}ature}.\hskip 1em plus
  0.5em minus 0.4em\relax W. H. Freeman, 1983.

\bibitem{Huang.Mumford1999}
J.~Huang and D.~Mumford, ``Statistics of natural images and models,'' in
  \emph{IEEE Computer Society Conference on Computer Vision and Pattern
  Recognition}, Fort Collins, CO, 23-25 June 1999, pp. 637--663.

\bibitem{Gelfand.Vilenkin1964}
I.~Gelfand and N.~Y. Vilenkin, \emph{Generalized Functions. Vol. 4.
  Applications of Harmonic Analysis.}\hskip 1em plus 0.5em minus 0.4em\relax
  New York, USA: Academic Press, 1964.

\bibitem{Sato.1994}
K.~Sato, \emph{L\'evy {P}rocesses and {I}nfinitely {D}ivisible
  {D}istributions}.\hskip 1em plus 0.5em minus 0.4em\relax Cambridge, 1994.

\bibitem{Steutel.Harn2004}
F.~W. Steutel and K.~V. Harn, \emph{Infinite {D}ivisibility of {P}robability
  {D}istributions on the {R}eal {L}ine}.\hskip 1em plus 0.5em minus 0.4em\relax
  Marcel Dekker, 2004.

\bibitem{Unser.etal2011b}
M.~Unser, P.~D. Tafti, A.~Amini, and H.~Kirshner, ``A unified formulation of
  {G}aussian vs. sparse stochastic processes--{P}art {II}: Discrete-domain
  theory,'' \emph{arXiv:1108.6152v1}.

\bibitem{Amini.etal2011}
A.~Amini, M.~Unser, and F.~Marvasti, ``Compressibility of deterministic and
  random infinite sequences,'' \emph{IEEE Transactions on Signal Processing},
  vol.~59, no.~11, pp. 5193--5201, November 2011.

\bibitem{Wipf.Nagarajan2010}
D.~Wipf and S.~Nagarajan, ``Iterative reweighted $\ell_1$ and $\ell_2$ methods
  for finding sparse solutions,'' \emph{IEEE Journal of Selected Topics in
  Signal Processing}, vol.~4, no.~2, pp. 317 --329, April 2010.

\bibitem{Chartrand.Yin2008}
R.~Chartrand and W.~Yin, ``Iteratively reweighted algorithms for compressive
  sensing,'' in \emph{IEEE International Conference on Acoustics, Speech, and
  Signal Processing}, Las Vegas, NV, 31 March-4 April 2008, pp. 3869--3872.

\bibitem{Zhang.Kingsbury2010}
Y.~Zhang and N.~Kingsbury, ``Fast {L0}-based sparse signal recovery,'' in
  \emph{IEEE International Workshop on Machine Learning for Signal Processing},
  Kittila, 29 August-1 Septembre 2010, pp. 403--408.

\bibitem{Kamilov.etal2012}
U.~Kamilov, E.~Bostan, and M.~Unser, ``Wavelet shrinkage with consistent cycle
  spinning generalizes total variation denoising,'' \emph{IEEE Signal
  Processing Letters}, vol.~19, no.~4, pp. 187--190, April 2012.

\bibitem{Candes.etal2008}
E.~Cand\`es, M.~B. Wakin, and S.~P. Boyd, ``Enhancing sparsity by reweighted
  $\ell_1$ minimization,'' \emph{Journal of Fourier Analysis and Applications},
  vol.~14, no.~5, pp. 877--905, December 2008.

\bibitem{Wang.etal2008}
Y.~Wang, J.~Yang, W.~Yin, and Y.~Zhang, ``A new alternating minimization
  algorithm for total variation image reconstruction,'' \emph{SIAM Journal on
  Imaging Sciences}, vol.~1, no.~3, pp. 248Ð--272, July 2008.

\bibitem{Ramani.Fessler2010}
S.~Ramani and J.~A. Fessler, ``Regularized parallel {MRI} reconstruction using
  an alternating direction method of multipliers,'' in \emph{IEEE International
  Symposium on Biomedical Imaging: From Nano to Macro}, Chicago, IL, 30 March-2
  April 2011, pp. 385--388.

\bibitem{Afonso.etal2011}
M.~V. Afonso, J.~M. Bioucas-Dias, and M.~A.~T. Figueiredo, ``An augmented
  {L}agrangian approach to the constrained optimization formulation of imaging
  inverse problems,'' \emph{IEEE Transactions on Image Processing}, vol.~20,
  no.~3, pp. 681--695, March 2011.

\bibitem{Nocedal.Wright2006}
J.~Nocedal and S.~J. Wright, \emph{Numerical Optimization}.\hskip 1em plus
  0.5em minus 0.4em\relax Springer, 2006.

\bibitem{Boyd.etal2011}
S.~Boyd, N.~Parikh, E.~Chu, B.~Peleato, and J.~Eckstein, ``Distributed
  optimization and statistical learning via the alternating direction method of
  multipliers,'' \emph{Foundations and Trends in Machine Learning}, vol.~3,
  no.~1, pp. 1--122, 2011.

\bibitem{Hansen.etal2006}
C.~Hansen, J.~G. Nagy, and D.~P. O'Leary, \emph{Deblurring Images: {M}atrices,
  {S}pectra, and {F}iltering}.\hskip 1em plus 0.5em minus 0.4em\relax SIAM,
  2006.

\bibitem{Preza.etal1993}
C.~Preza, M.~I. Miller, and J.-A. Conchello, ``Image reconstruction for 3{D}
  light microscopy with a regularized linear method incorporating a smoothness
  prior,'' in \emph{SPIE Symposium on Electronic Imaging}, vol. 1905, 29 July
  1993, pp. 129--139.

\bibitem{Dey.etal2006}
N.~Dey, L.~Blanc-F\'eraud, C.~Zimmer, P.~Roux, Z.~Kam, J.-C. Olivo-{M}arin, and
  J.~Zerubia, ``{R}ichardson-Ð{L}ucy algorithm with total variation
  regularization for 3{D} confocal microscope deconvolution,'' \emph{Microscopy
  Research and Technique}, vol.~69, no.~4, pp. 260--266, April 2006.

\bibitem{Block.etal2007}
K.~T. Block, M.~Uecker, and J.~Frahm, ``Undersampled radial {MRI} with multiple
  coils. {I}terative image reconstruction using a total variation
  constraints,'' \emph{Magnetic Resonance in Medicine}, vol.~57, no.~6, pp.
  1086--1098, June 2007.

\bibitem{Trzasko.Manduca2009}
J.~Trzasko and A.~Manduca, ``Highly undersampled magnetic resonance image
  reconstruction via homotopic $\ell_0$-minimization,'' \emph{IEEE Transactions
  on Medical Imaging}, vol.~28, no.~1, pp. 106--121, January 2009.

\bibitem{Entezari.etal2012}
A.~Entezari, M.~Nilchian, and M.~Unser, ``A box spline calculus for the
  discretization of computed tomography reconstruction problems,'' \emph{IEEE
  Transactions on Medical Imaging}, vol.~31, no.~7, pp. 1289--1306, 2012.

\bibitem{Wang.etal2006}
J.~Wang, T.~Li, H.~Lu, and Z.~Liang, ``Penalized weighted least-squares
  approach to sinogram noise reduction and image reconstruction for low-dose
  {X}-ray computed tomography,'' \emph{IEEE Transactions on Medical Imaging},
  vol.~25, no.~10, pp. 1272--1283, Ocotber 2006.

\bibitem{Qun.Jacques2005}
Z.~Xiao-Qun and F.~Jacques, ``Constrained total variation minimization and
  application in computerized tomography,'' in \emph{Proceedings of the 5th
  International Conference on Energy Minimization Methods in Computer Vision
  and Pattern Recognition}, St. Augustine, FL, 2005, pp. 456--472.

\end{thebibliography}

\end{document}